\newtheorem{theorem}{Theorem}
\newtheorem{lemma}[theorem]{Lemma}
\begin{document}
\title{Conditional Success of Adaptive Therapy: The Role of Treatment-Holiday Thresholds and Non-Existence of Optimal Strategies Revealed by Mathematical Modelling and Optimal Control}
\author{Lanfei Sun $^\dagger$}\address{School of Mathematics Statistics and Mechanics, Beijing University of Technology, Beijing 100124, China}
\author{Haifeng Zhang $^\dagger$}\address{School of Mathematical Sciences, Jiangsu University, Zhenjiang, 212013, China}
\author{Kai Kang}\sameaddress{1}
\author{Xiaoxin Wang}\sameaddress{1}
\author{Leyi Zhang}\sameaddress{1}
\author{Yanan Cai}\address{College of Science and Engineering, James Cook University, Queensland, 4814, Australia}\secondaddress{This work is done when Yanan Cai visits Western Sydney University}
\author{Changjing Zhuge $^*$}\sameaddress{1}
\author{Lei Zhang $^*$}\address{Department of Liver Surgery, Peking Union Medical College Hospital, Chinese Academy of Medical Sciences and Peking Union Medical College, Beijing 100730, China}

\date{}
\begin{abstract} 
Adaptive therapy improves cancer treatment by controlling the competition between sensitive and resistant cells through treatment holidays. This study highlights the critical role of treatment-holiday thresholds in adaptive therapy for tumors composed of drug-sensitive and resistant cells. Using a Lotka-Volterra model, adaptive therapy outcomes are compared with maximum tolerated dose therapy and intermittent therapy outcomes, showing that adaptive therapy success depends critically on the threshold for pausing and resuming treatment and on competitive interactions between cell populations. Three comparison scenarios between adaptive therapy and other therapies emerge: uniform-decline where adaptive therapy underperforms regardless of threshold, conditional-improve where efficacy requires threshold optimization, and uniform-improve where adaptive therapy consistently outperforms alternatives. Tumor composition including initial burden and resistant cell proportion influences outcomes. Threshold adjustments enable adaptive therapy to suppress resistant subclones while preserving sensitive cells, extending progression-free survival. Crucially, this work establishes an optimal control problem for time-to-progression and mathematically proves that under biological constraints like neutral competition or low initial burden, the theoretically optimal strategy is unrealizable as it requires infinitely many treatment holidays, rendering it clinically impractical. These findings emphasize personalized treatment strategies for enhancing long-term therapeutic outcomes.
\end{abstract}
%
%
\subjclass{92C50,92C42}
\keywords{adaptive therapy, treatment-holiday threshold, cancer dynamics, mathematical modelling, cell competition
}
\maketitle
\renewcommand{\thefootnote}{$\dagger$}
\footnotetext{These authors contributed equally.}
\renewcommand{\thefootnote}{*}
\footnotetext{Corresponding authors. zhuge@bjut.edu.cn (Changjing Zhuge); zhanglei44@pumch.cn (Lei Zhang)}
\renewcommand{\thefootnote}{\arabic{footnote}}
\section{Introduction}
Cancer remains one of the leading causes of death worldwide \cite{Bray2024}, with its biological complexity and heterogeneity posing significant challenges to conventional therapies such as chemotherapy, radiotherapy, and even immunotherapy \cite{Sung2021,Marusyk2012,Meacham2013a}. Therapeutic resistance, a central obstacle in cancer treatment, arises from the therapeutic selection pressure of pre-existing or induced resistant tumor cells \cite{Holohan2013,Sharma2010,Shi2023,Wang2019b,Tian2024}. This process drives the evolution of tumor tissue, ultimately leading to the dominance of primary or induced refractory subpopulations, and subsequent relapse and treatment failure \cite{Greaves2012,McGranahan2017,Turajlic2019}. The inherent heterogeneity and plasticity of tumor cells often render complete eradication exceedingly difficult, highlighting the necessity of strategies that address the entire tumor population and its evolutionary dynamics, rather than targeting specific subpopulations \cite{Meacham2013a,Gupta2011a,Marusyk2020}.

Recently, ecological theory has provided an innovative framework for understanding tumor progression, conceptualizing tumors as ecosystems where drug-sensitive and drug-resistant cells compete for resources \cite{Gatenby2009a,Gatenby2009,Brown2023,Aguade-Gorgorio2024a,Zahir2020,Bukkuri2023,Bilder2021,Hochberg2018,Brady2019}. Within this ecological framework, cancer cells under drug treatment activate regenerative or protective mechanisms to survive near-lethal stress, enabling them to enter a stem cell-like state and transmit resistance to their progeny \cite{Attolini2009,Vasan2019a}. This adaptation imposes costs, such as reduced proliferation rates, decreased environmental carrying capacity, and intensified interclonal competition \cite{Zhang2022e, Pressley2021}. 

The conventional maximum tolerated dose (MTD) therapy aggressively eliminates sensitive cell populations. However, this strategy inadvertently disrupts competitive equilibria, allowing resistant subclones to proliferate uncontrollably \cite{Maley2017,Enriquez-Navas2016,AthenaAktipis2015a,Gatenby2009,West2023}. This unintended consequence of MTD therapy suggests the limitations of conventional eradication-focused paradigms and highlights the need for therapeutic strategies that preserve sensitive cells to maintain ecological suppression of resistant populations \cite{Gatenby2009a,Gatenby2009,Gallaher2018,West2020a}.

Adaptive therapy (AT) is a dynamic treatment strategy that employs a feedback-controlled approach, adjusting therapeutic intensity and scheduling in response to real-time tumor burden \cite{Gatenby2009,Enriquez-Navas2015a,Enriquez-Navas2016}: treatment is paused when tumor burden declines to a certain lower threshold and resumed upon regrowth to an upper threshold. Adaptive therapy can be classified into dose-skipping (AT-S) and dose-modulating (AT-M) regimens, depending on whether more than two dose rates are used during treatment phases. By retaining a reservoir of sensitive cells, this intermittent dosing modality suppresses resistant subclone expansion and prolongs progression-free survival, as evidenced by clinical successes in prostate cancer and other malignancies.

Despite significant advancements \cite{West2023,Zhang2022e,Zhang2017d,McGehee2024.02.19.580916,Strobl2023b}, further research is needed to refine the optimization of treatment-holiday thresholds, a core parameter governing AT efficacy, as most existing studies focus on the efficiency of adaptive therapy under predefined thresholds. Several studies have explored the effects of varying thresholds on AT outcomes across different scenarios \cite{Liu2024d,Strobl2023b,Wang2024c,McGehee2024.02.19.580916,Tan2024,Strobl2024}, providing valuable insights that underscore the importance of threshold adjustments and establishing a solid foundation for further investigation. However, a deeper understanding of the dynamic mechanisms driving context-dependent outcomes under heterogeneous physiological conditions remains essential. It requires a systematic analysis of the underlying dynamical mechanisms by which different thresholds give rise to distinct effects, enhancing conventional empirical threshold selection to achieve an optimal balance between tumor suppression and resistance mitigation. Moreover, it can further facilitate the development of personalized strategies tailored to different conditions. Consequently, quantitative modelling of threshold-driven tumor evolutionary dynamics is essential for advancing precision in adaptive therapy design.

Therefore, this study leverages a mathematical model of tumor evolutionary dynamics to analyze and explain the effects of the thresholds on the outcomes of AT under various conditions of competitive interactions between sensitive and resistant subpopulations, elucidating the biophysical principles underlying threshold-mediated resistance control. The findings are expected to provide theoretical validation for personalized adaptive therapy protocols, facilitating a paradigm shift toward dynamically optimized cancer management. 

In summary, this study, as a state-of-the-art work, systematically investigates the impact of treatment-holiday thresholds on adaptive therapy outcomes using a Lotka–Volterra model that simulates the competitive dynamics between drug-sensitive and drug-resistant tumor cell populations. By quantifying the relationship between treatment-holiday threshold and employing time-to-progression (TTP) as an indicator of progression-free survival, the analysis demonstrates that AT efficacy critically depends on the competition between two subpopulations of cancer cells. Strong competition can allow for the indefinite delay of disease progression,
while weak competition requires the benefits of AT. Furthermore, this work also reveals that AT is not uniformly superior to MTD, as its efficiency depends on tumor composition (initial conditions) and patients' personal conditions (parameters). By comparing the TTP between AT and MTD, three distinct scenarios are identified: 1) \textbf{uniform-decline}, where AT consistently underperforms MTD regardless of the threshold; 2) \textbf{conditional-improve}, where the effectiveness of AT depends on the specific threshold selected; and 3) \textbf{uniform-improve}, where AT consistently outperforms MTD. These findings highlight the necessity of precise threshold tuning to optimize treatment outcomes. Collectively, these findings provide a theoretical foundation for refining empirical threshold selection and advancing personalized adaptive therapy protocols, thereby paving the way for dynamically optimized cancer management.

\section{MODEL DESCRIPTION}
To investigate the effects of different treatment strategies, the tumor dynamics under therapeutic conditions are modelled by the widely used Lotka-Volterra (LV)-type equations \cite{Strobl2021a,Aguade-Gorgorio2024a,McGehee2024.02.19.580916,Mathur2022,Beckman2020,Gatenby919,Gallagher2023.04.28.538766} because the cancer cells behave more similarly to unicellular organisms than to normal human cells \cite{Brown2023,Ermini2023,Pennisi2018,Deng2024}. The cancer cells are further assumed to be classified into two types: drug-sensitive cells ($S$), which can be killed by drugs, and drug-resistant cells ($R$), which cannot be killed by drugs. These two cell types compete within the cancer microenvironment (Figure \ref{fig:T:cartoon}A), and therefore the LV model is described as the following differential equations \eqref{eq:core-model:1}-\eqref{eq:core-model:3}. 
\begin{eqnarray}
    \dfrac{\mathrm{d}S}{\mathrm{d}t} &=& r_{S}\left(1-\dfrac{S+\alpha R}{K}\right)\left(1-\gamma\dfrac{D(t)}{D_0}\right)S-d_{S}S, \label{eq:core-model:1}\\
    \dfrac{\mathrm{d}R}{\mathrm{d}t} &=& r_R\left(1-\dfrac{\beta S+R}{K}\right)R-d_RR, \label{eq:core-model:2}\\
    N(t) &=& S(t)+R(t), \label{eq:core-model:3}
\end{eqnarray}
where $D(t)$ denotes the dose curve of therapeutic agents and $S$ and $R$ represent the burdens of the two types of cells respectively. Equation \eqref{eq:core-model:1} describes the growth and death of sensitive cells, which means that in the absence of therapy, their proliferation is regulated by intrinsic rate $ r_S $, logistic competition $\left( 1 - \dfrac{S + \alpha R}{K} \right)$ and the death rate is $d_S$. The drug-dependent suppression of sensitive cells is represented by the factor $\left( 1 - \gamma\dfrac{D(t)}{D_0} \right)$ where $\gamma$ is the killing strength of therapeutic agents on sensitive cells. Although this factor is multiplied with the proliferation terms, it is the integrated effective term taking both suppression of proliferation and increasing death rate for simplicity. Equation \eqref{eq:core-model:2} governs resistant cells, which obey logistical growth $\left( 1 - \dfrac{\beta S + R}{K} \right)$ at rate $ r_R $ and die at constant rate $ d_R $, unaffected by the drug. $K$ is the carrying capacity of the cancer microenvironment, indicating the maximum number of cells that can be supported. 

Moreover, the model equations \eqref{eq:core-model:1}–\eqref{eq:core-model:2} incorporate asymmetric competition between drug‐sensitive and drug‐resistant cells, characterized by the parameters $\alpha$ and $\beta$. Here, $\alpha$ quantifies the inhibitory effect of drug‐resistant cells on the growth of sensitive cells, while $\beta$ represents the impact of drug‐sensitive cells on the growth of resistant cells. Specifically, a higher value of $\alpha$ implies that resistant cells more strongly reduce the growth potential of sensitive cells, possibly due to resource competition or indirect suppression via secreted factors (e.g., growth factors, cytokines). Similarly, a higher value of $\beta$ indicates that sensitive cells exert a stronger influence on the growth of resistant cells, potentially through competition for space, nutrients, or other resources within the tumor microenvironment.

These competitive interactions determine the dominance relationships between the cell populations. For example, in the absence of therapy, coexistence of the two populations is possible under weak competition (i.e., $\alpha < 1$ and $\beta < 1$), whereas strong competition leads to bistability \cite{McGehee2024.02.19.580916,Aguade-Gorgorio2024a,Chapman2014,Aguade-Gorgorio2024b}. Specifically, if the resistant cell population is competitively dominant (i.e., $\alpha > 1$ and $\beta < 1$), the sensitive cell population will eventually become extinct; conversely, if $\alpha < 1$ and $\beta > 1$, the resistant cell population will be eliminated \cite{McGehee2024.02.19.580916}. In addition, the competition coefficients also significantly influence treatment outcomes, with a competitive advantage for sensitive cells being crucial for eradicating resistance \cite{Masud2022,Robertson-Tessi2024,Masud2022,Emond2023}. Thus, it is necessary to take these competitive dynamics into account in order to design better strategies.

\begin{figure}[phtb]
\centering
\includegraphics[width=\textwidth]{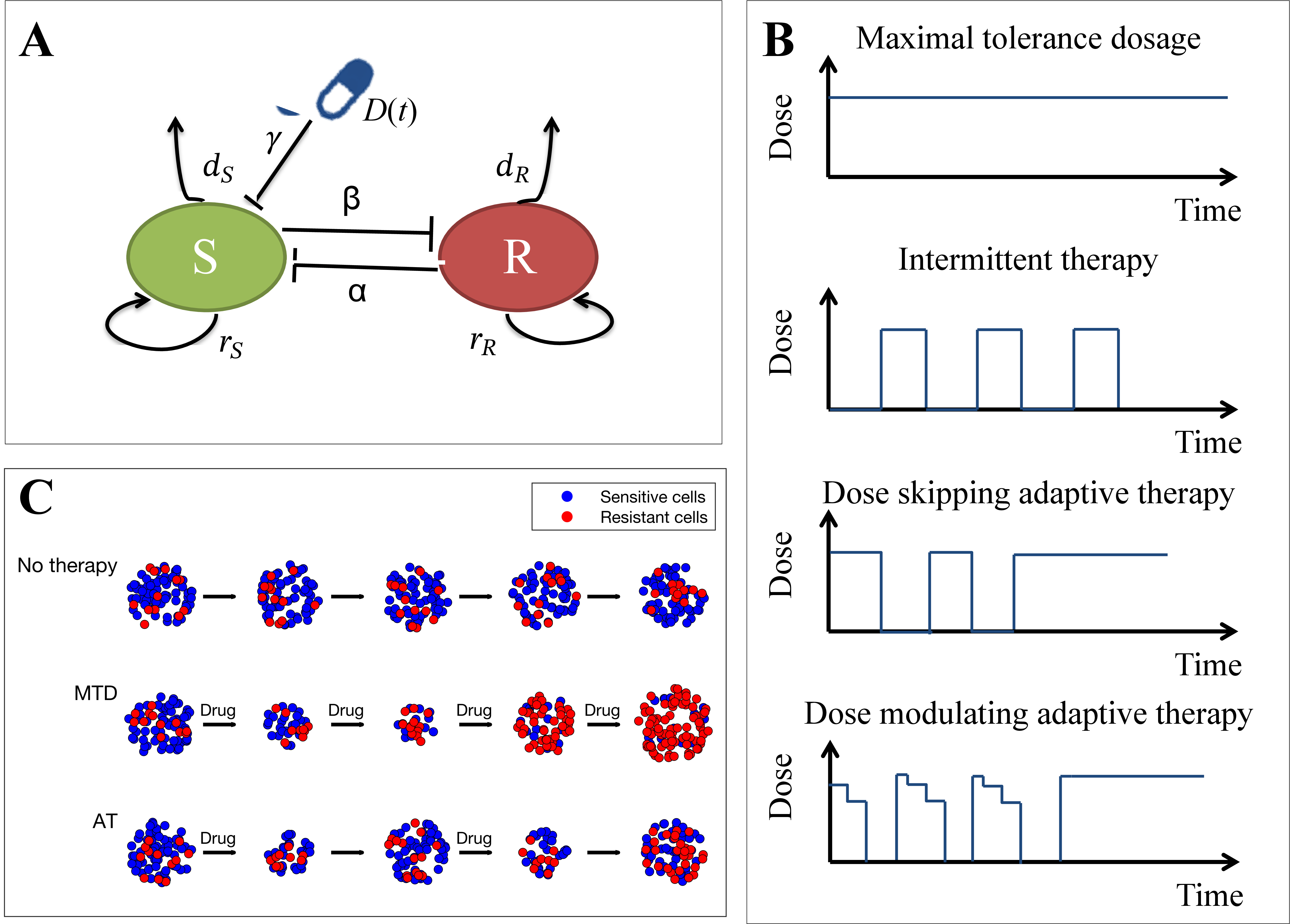}
\caption{Schematic diagram of cancer dynamics models and illustrations of different treatment strategies and their outcomes.
\textbf{(A)} A schematic representation of the Lotka-Volterra competition between sensitive cancer cells ($S$) and resistant cancer cells ($R$). The interactions between these two cell types are influenced by the carrying capacity $K$ and the competition coefficients $\alpha$ and $\beta$. Additionally, in the presence of therapeutic agents, sensitive cells are killed by the drugs.
\textbf{(B)} Schematic illustration of four typical therapeutic strategies: the maximal tolerance dosage (MTD), intermittent therapy (IT), dose-skipping adaptive therapy (AT-S), and dose-modulating adaptive therapy (AT-M).
\textbf{(C)} Schematic illustration of tumor burden and compositional evolution under distinct therapeutic strategies.}
\label{fig:T:cartoon}
\end{figure}

Different therapeutic strategies are characterized by distinct profiles of the dosing function $D(t)$. Two widely used therapeutic approaches and two regimens of adaptive therapy are considered in this work, including maximum tolerated dose, intermittent therapy (IT), dose-skipping adaptive therapy (AT-S) and dose-modulating adaptive therapy (AT-M) (Figure \ref{fig:T:cartoon}B). The conventional maximum tolerated dose approach seeks to maximize cancer cell eradication by administering the highest dose that a patient can tolerate based on acceptable toxicity levels, which can be defined by continuous administration as \eqref{eq:D:MTD}.
\begin{flalign}
\text{(MTD)} && D(t) = D_0, \quad \mbox{for all }\quad t.&&\label{eq:D:MTD}
\end{flalign}
As continuous high-intensity dosage leads to the development of therapeutic resistance through Darwinian selection pressures \cite{Enriquez-Navas2016}, as well as that high dosage of therapeutic agents causes severe toxicity, the intermittent therapy regimens were proposed, which strategically suspend drug administration during predefined recovery periods and emerged as a clinically utilized strategy for balancing therapeutic efficacy with toxicity management \cite{Mathur2022}. The IT strategies can be characterized as deterministic treatment cycles as \eqref{eq:D:IT}.
\begin{flalign}
    \text{(IT)}&&
    D(t) = \begin{cases}
                D_0, & nT\leq t \leq nT+T_D, \\
                0, & \mbox{otherwise}.
            \end{cases}
    &&
    \label{eq:D:IT}
\end{flalign}
where $T$ is the fixed period of one cycle of treatment, $T_D$ is the duration of drug administration and $n=1,2,\cdots$ represents the number of cycles.

As described before, AT dynamically adjusts cancer treatment based on tumor response to prolong efficacy and manage resistance. Two primary approaches within adaptive therapy are dose-skipping and dose-modulating regimens. In dose-skipping adaptive therapy, treatment is administered at the maximum tolerated dose until the tumor burden shrinks to a predetermined threshold ($C_\mathrm{TH0}$), for example, $50$\% of the baseline burden; therapy is then paused, allowing the tumor to regrow to a specific threshold ($C_\mathrm{TH1}$), usually, $100$\% of the baseline, before resuming treatment. In dose-modulating adaptive therapy, treatment doses are adjusted at regular intervals based on tumor response: increasing the dose if the tumor grows and decreasing it if the tumor shrinks. This strategy seeks to tailor therapy dynamically to tumor behavior, potentially reducing toxicity and managing resistance more effectively. Both approaches exploit the competitive dynamics between treatment-sensitive and resistant cancer cells to enhance long-term treatment outcomes and have already succeeded in trials or experiments \cite{Zhang2017d,Enriquez-Navas2016,Smalley2019a,Seyedi2024,West2023}. Based on the above description, the two AT regimens can be formulated as following equations \eqref{eq:D:AT-S} and \eqref{eq:D:AT-M} respectively.
\begin{flalign}
    \text{(AT-S)}&&
    D(t)=\begin{cases}
        D_0,  &  \mbox{if } N(t) \geq C_\mathrm{TH0} N_0 \text{ until } N(t) \geq C_\mathrm{TH1}N_0,  \\
        0,  & \mbox{if } N(t)\leq C_\mathrm{TH1} N_0\text{ and } N^{\prime}(t)>0 \mbox{ until } N(t)\geq C_\mathrm{TH1}N_0.
    \end{cases}\label{eq:D:AT-S}
    &&\\[0.1cm]
    \text{(AT-M)}&&
    D(t) = \begin{cases}
        (1+\delta_1)D_0, & \mbox{if } N(t) \geq C_\mathrm{TH2}N_0, \\
        D_0, & \mbox{if } C_\mathrm{TH1} N_0 \leq N(t) \leq C_\mathrm{TH2}N_0 \mbox{ until } N(t) \leq C_\mathrm{TH1} N_0, \\
        (1-\delta_2)D_0, & \mbox{if }C_\mathrm{TH0}N_0 \leq N(t) \leq C_\mathrm{TH1}N_0 \mbox{ until } N(t) \leq C_\mathrm{TH0}N_0. \\
        0, & \mbox{if } N(t) \leq C_\mathrm{TH2}N_0 \mbox{ and } N'(t)>0 \mbox{ until } N(t)\geq C_\mathrm{TH2}N_0.
    \end{cases}
    \label{eq:D:AT-M}
    &&
\end{flalign}
where $N_0=S_0+R_0$ is the initial tumor burden and $N'(t)$ is the derivative of $N(t)$ representing whether the tumor burden is growing or shrinking. The dose-modulating thresholds satisfy $C_\mathrm{TH0}<C_\mathrm{TH1}<C_\mathrm{TH2}$ and $C_\mathrm{TH0}<1<C_\mathrm{TH2}$. Notably, the dosing strategies \eqref{eq:D:AT-S} and \eqref{eq:D:AT-M} represent state-dependent switching mechanisms rather than closed-form mathematical definitions. This implementation induces history-dependent therapeutic decision and consequently, the governing equations \eqref{eq:core-model:1}-\eqref{eq:core-model:3} constitute delay differential equations with history-dependent terms $D(t)$.

As a state-of-the-art investigation, this study provides comprehensive parameter analysis rather than case-specific parameter estimation. So, in this work, the parameter values are taken in ranges according to established experimental and clinical studies \cite{Zhang2017d,McGehee2024.02.19.580916,West2019b,Sunassee2019,Grassberger2019a}. The proliferation rates of sensitive and resistant cells, i.e. $r_S$ and $r_R$, are derived from \textit{in vitro} experimental data \cite{Zhang2017d}, while the drug-induced killing strength $\gamma$ is sourced from the estimation based on several experiments of cell lines\cite{West2019b}. Given the critical clinical risk associated with large tumor burdens, the initial tumor burden in this study is set as 75\% of the environmental carrying capacity \cite{Sunassee2019}. The initial proportion of resistant cells is taken from existing dynamic models \cite{Grassberger2019a}. All the baseline parameter values are summarized in Table \ref{tab:parameters}.

\begin{table}[tbph]
\caption{\label{tab:parameters}Model parameters and their ranges.}
\begin{tabular}{l p{18em} l l}
\hline
Parameter & Description & Value or range & Reference \\ \hline
$K$ & Total environmental carrying capacity & $10000$ & \\\hline
$\alpha$ & Competition coefficient of resistant cells on sensitive cells & $(0,+\infty)$ & \\\hline
$\beta$ & Competition coefficient of sensitive cells on resistant cells & $(0,+\infty)$ & \\\hline
$r_S$ & Proliferation rate of sensitive cells & $0.035$ & \cite{Zhang2017d} \\\hline
$r_R$ & Proliferation rate of resistant cells & $0.027$ & \cite{Zhang2017d} \\\hline
$d_S$ & Death rate of sensitive cells & $0.0007$ & \cite{Zhang2017d} \\\hline
$d_R$ & Death rate of resistant cells & $0.00054$ & \cite{Zhang2017d} \\\hline
$\gamma$ & The killing strength of therapeutic agents & $1.5$ & \cite{West2019b} \\\hline
$D_0$ & Maximal tolerance dose & $1$ \\
[0.3cm]\hline
$n_0=\dfrac{N_0}{K}$ & Initial tumor burden& $[0,0.75]$ & \cite{Sunassee2019} \\
[0.3cm]\hline
$f_{R}=\dfrac{R_0}{R_0+S_0}$ & Initial proportion of resistant cells & $[0.001,0.01]$ & \cite{Grassberger2019a} \\\hline
$C_\mathrm{TH0}$, $C_\mathrm{TH1}$&\multicolumn{3}{l}{The treatment-holiday thresholds for dose-skipping adaptive therapy}\\\hline
$C_\mathrm{TH0}$, $C_\mathrm{TH1}$, $C_\mathrm{TH2}$ & \multicolumn{3}{l}{The dose-modulating thresholds for dose-modulating adaptive therapy}\\\hline
\end{tabular}
\end{table}
\section{RESULTS}
\subsection{The treatment-holiday threshold is crucial for the efficiency of adaptive therapy}
To investigate the effect of treatment-holiday thresholds on the outcomes of adaptive therapy, a simplified assumption of neutral competition ($\alpha=\beta=1$) is considered. The outcomes are primarily characterized by the time to progression (TTP), which is typically defined as the time at which the tumor burden grows to a certain percentage greater than its initial burden\cite{Zhang2017d,McGehee2024.02.19.580916,West2019b,Sunassee2019,Grassberger2019a}. For simplicity, this percentage is set to $120$\% in this study.

As illustrated in the example of tumor evolution under adaptive therapy (Figure~\ref{fig:T:equal-1}A), during the initial phase of adaptive therapy, drugs are administered to suppress tumor growth primarily through the depletion of sensitive cells. During this stage, the proportion of sensitive cells is high, which aligns with real-world cases. Following the principles of adaptive therapy, treatment is paused when the tumor burden reaches a predefined treatment-holiday threshold $C_{\mathrm{TH0}}N_0$, conventionally set at $50$\%  of the initial tumor burden, i.e., $C_{\mathrm{TH0}}=0.5$. The treatment holidays allow sensitive cells to repopulate, leading to tumor regrowth. When the tumor burden returns to the initial level $N_0$, treatment resumes. The treatment-holiday threshold $C_\mathrm{TH0}$ thus defines a cycle, from the start of treatment to its suspension. These cycles continue until resistant cells dominate the tumor burden, ultimately leading to disease progression.

To further elucidate this phenomenon, phase-plane analysis reveals distinct dynamical regimes with and without treatment (Figure~\ref{fig:T:equal-1}B1-B2). In untreated systems, tumor populations evolve toward coexistence equilibria, as indicated by the nullcline, which mathematically corresponds to the diagonal in Figure~\ref{fig:T:equal-1}B1. In the presence of treatment (Figure~\ref{fig:T:equal-1}B2), the long-term steady state shifts to the upper-left corner, where resistant cells dominate the tumor, and sensitive cells eventually disappear. From a dynamical perspective, treatment exerts selective pressure, pushing the tumor toward a resistant cell-dominant state. Therefore, the significance of treatment holidays lies in allowing the tumor to remain longer in a state dominated by sensitive cells, thereby delaying the transition to resistance. In this regard, increasing the treatment-holiday threshold can slow the transition of tumor composition to a resistant state, which implies it is possible to improve adaptive therapy outcomes by modulating the threshold and preventing continuous treatment from driving the tumor toward resistance via Darwinian selection \cite{Gatenby2009}.

\begin{figure}[phtb]
\centering
\includegraphics[width=\textwidth]{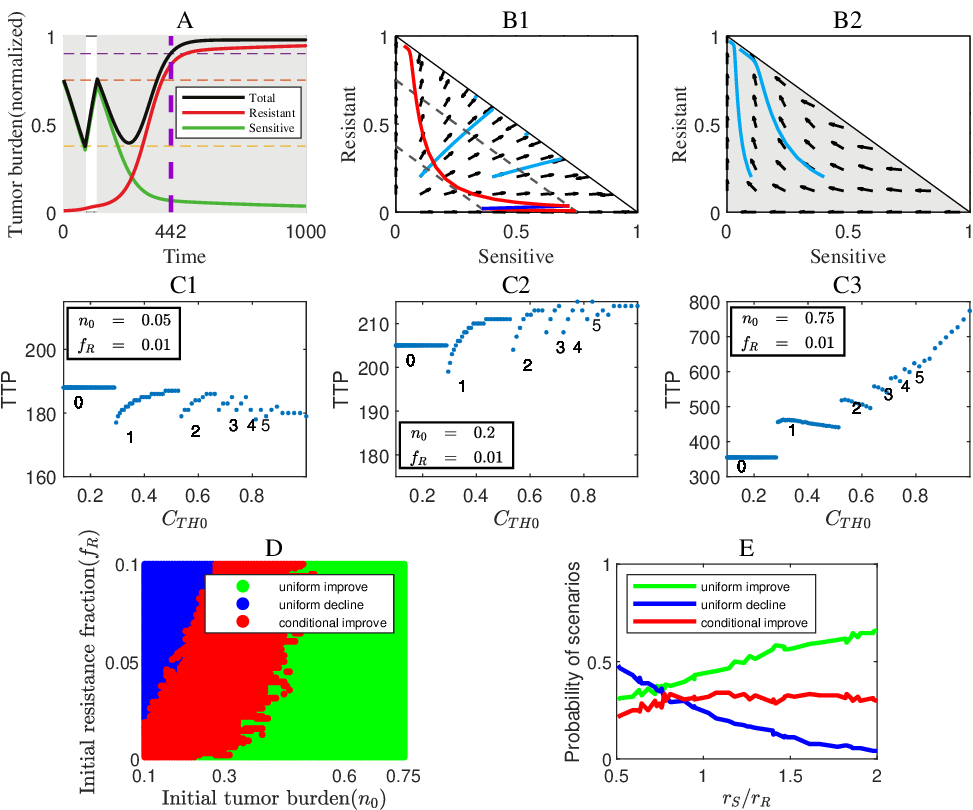}
\caption{Threshold-mediated therapeutic outcomes under neutral competition ($\alpha = \beta = 1$).
\textbf{(A)} A representative example of tumor evolution under adaptive therapy (AT), where the initial proportion of resistant cells $f_R = 0.01$ and the initial tumor burden as a fraction of carrying capacity is $n_0 = 0.75$. The purple dashed line represents the time to progression (TTP) under the AT strategy with the threshold $C_\mathrm{TH0} = 0.5$; the red, green, and black lines represent resistant cells, sensitive cells, and total tumor burden, respectively. Shaded areas indicate periods of active treatment. 
\textbf{(B1-B2)} The phase diagram and the vector field of tumor evolution with and without treatment. Light blue trajectories illustrate the trajectories with different tumor initial states. The upper right blank region represents situations beyond environmental carrying capacity ($K$). Panel (B1) illustrates the dynamics without treatment, while panel (B2) shows the cases with maximum tolerated dose strategy. The dark blue trajectory in (B1) shows the dynamics of the total tumor burden without treatment, while the red curve illustrates dynamics under AT as shown in panel (A). 
\textbf{(C1)-(C3)} The impact of changing the treatment-holiday threshold on TTPs under different initial tumor compositions. The numbers on the plots indicate the total number of treatment cycles experienced before disease progression, with $0$ cycles of AT corresponding to the MTD strategy. Panel (C1) represents a \textbf{uniform-decline} scenario of AT, where the TTPs with AT-S are consistently worse than those with MTD; panel (C2) shows a \textbf{conditional-improve} scenario, where AT-S outperforms MTD only for certain threshold values; and panel (C3) demonstrates a \textbf{uniform-improve} scenario, where all AT outperforms MTD for all threshold values.
\textbf{(D)} The dependence of the three scenarios on the initial tumor burden and composition.
\textbf{(E)} Probability of the three outcome scenarios versus the proliferative fitness ratio $r_S/r_R$ with randomly initial tumor states, demonstrating that a sensitive cell growth advantage ($r_S/r_R > 1$) enhances the AT success probability.
}
\label{fig:T:equal-1}
\end{figure}

Therefore, the effects of varying the treatment-holiday threshold are examined under different initial tumor states, characterized by the tumor burden ($n_0$) and the fraction of resistant cells ($f_R$) (ref. Table~\ref{tab:parameters}). As shown in Figures~\ref{fig:T:equal-1}C1-C3, altering the treatment-holiday threshold significantly impacted TTP. Notably, when the threshold is too low, meaning the tumor never reached the treatment-holiday threshold even during the initial cycle, adaptive therapy lacks treatment holidays and effectively becomes equivalent to MTD therapy. Consequently, as $C_{\mathrm{TH0}}$ is around $0$, TTP of AT is de facto that of MTD therapy, as shown in the leftmost segments of Figures~\ref{fig:T:equal-1}C1-C3.

Importantly, for different initial tumor states, three distinct scenarios emerged when comparing adaptive therapy with MTD. The first scenario, \textit{uniform-decline} (Figure~\ref{fig:T:equal-1}C1), occurs when adaptive therapy consistently performs worse than MTD, regardless of the threshold. This scenario typically arises in tumors with a low initial tumor burden. The second scenario, \textit{conditional-improve} (Figure~\ref{fig:T:equal-1}C2), occurs when adaptive therapy outperforms MTD only at specific threshold values, suggesting that careful threshold selection is required. The third scenario, \textit{uniform-improve} (Figure~\ref{fig:T:equal-1}C3), occurs when adaptive therapy is always superior to MTD.

The findings highlight the complex relationship between treatment-holiday thresholds and TTP. Moreover, as shown in Figures~\ref{fig:T:equal-1}C2-C3, the effect of the threshold is neither monotonic nor continuous, exhibiting discontinuities as the number of treatment cycles changes discretely (Figure \ref{fig:S:supplement_cellcount}). Specifically, in tumors with a low initial burden (Figures~\ref{fig:T:equal-1}C1 and C2), increasing the number of treatment holidays reduces drug exposure, which in turn weakens tumor suppression because under these conditions, the initial tumor burden is low, and therefore, environmental resources remain abundant which makes the changes in the burden of sensitive cells have little impact on the growth rate of resistant cells. In contrast, for tumors with a higher initial burden (Figure~\ref{fig:T:equal-1}C3), where environmental resources are constrained, fluctuations in sensitive cell numbers significantly affect the growth of resistant cells. In such cases, increasing the number of treatment holidays enables sensitive cells to grow and exert stronger suppression on resistant cells, leading to a marked increase in TTP. This aligns with clinical observations in prostate cancer, where intermittent dosing preserved therapeutic sensitivity \cite{Zhang2017d}.

To further investigate the impact of initial tumor states on these three scenarios, we examined how different initial tumor burdens and initial resistant cell fractions influence adaptive therapy outcomes (Figure~\ref{fig:T:equal-1}D). When the initial tumor burden exceeds $0.6K$, adaptive therapy uniformly improves TTP, regardless of the initial proportion of resistant cells. However, when the initial tumor burden is below $0.3K$, adaptive therapy can only conditionally improve TTP, implying that careful selection of the treatment-holiday threshold is crucial. In the worst scenario, when the initial tumor burden is very low and the proportion of resistant cells is relatively high (Figure~\ref{fig:T:equal-1}D, top-left), adaptive therapy fails to improve TTP at any threshold, indicating primary resistance. These findings extend prior theoretical work \cite{Strobl2021a} by quantifying phase-transition-like phenomena in threshold-mediated outcomes.

Furthermore, clinical implementation faces inherent uncertainty in tumor state assessment and threshold selection, implying that both the treatment-holiday threshold and the initial tumor burden and resistance fraction exhibit random variability. To address this, Monte Carlo simulations were performed to evaluate the probabilities of the three outcome scenarios under parameter variability (Figure~\ref{fig:T:equal-1}E). Inspired by the Norton-Simon Hypothesis \cite{Traina2011,Simon2006}, which posits that sensitive cells typically have a greater fitness advantage than resistant cells (i.e., sensitive cells have a higher intrinsic proliferation rate than resistant cells), we further explored the effect of the relative growth rate of sensitive cells ($r_S/r_R$) on the probabilities of the three scenarios. By randomly varying the treatment threshold and, while keeping $r_S/r_R$ fixed, randomly altering $r_S$ and $r_R$, we found that the probability of the \textit{uniform-improve} scenario increases as the relative fitness of sensitive cells increases. When $r_S/r_R$ is small, adaptive therapy is less likely to improve TTP relative to MTD. Conversely, when $r_S/r_R$ is high, indicating that sensitive cells have a significant growth advantage over resistant cells, adaptive therapy is more likely to enhance TTP. These results confirm that the treatment-holiday threshold has a profound impact on adaptive therapy outcomes and patient-specific tumor characteristics, including initial tumor burden and the relative fitness of sensitive and resistant cells, significantly influence treatment outcomes. These findings align with existing research \cite{Zhang2022e} and demonstrate that $r_S/r_R$ dependence provides quantitative criteria for patient stratification, providing insights for addressing a key challenge in empirical threshold selection \cite{West2023}.

\subsection{Context-dependent superiority of adaptive therapy across treatment strategies}
This subsection further compares the efficacy of adaptive therapy and intermittent therapy under the uniform-improve scenario, extending the previous comparison with maximum tolerated dose. As indicated in prior results, the effectiveness of different treatment strategies is influenced by treatment parameters. Therefore, to ensure a fair comparison, treatment parameters must be appropriately aligned. A reasonable approach is to compare strategies based on their best achievable outcomes (Figure~\ref{fig:T:equal-3}). Specifically, for intermittent therapy, the optimal treatment cycle and duration should be selected, while for adaptive therapy, the optimal threshold values should be determined.

The optimal parameter combination is determined based on different initial tumor states. Since this study does not focus on identifying these optimal parameters of intermittent therapy, computational details are omitted. However, as previous results suggest that under adaptive therapy in the conditional-improve and uniform-improve scenarios, a threshold closer to $1$ generally leads to a longer TTP (Figure~\ref{fig:T:equal-3}). In practical applications, a threshold near 1 would necessitate frequent initiation and cessation of treatment, requiring highly precise monitoring of tumor status, which is unrealistic. Similarly, for intermittent therapy, frequent treatment pauses and resumptions are impractical. Given these practical constraints, achieving the true optimal outcome is infeasible. Consequently, only suboptimal treatment outcomes that closely approximate the optimal results are considered as benchmarks for comparison.

\begin{figure}[phtb]
\centering
\includegraphics[width=\textwidth]{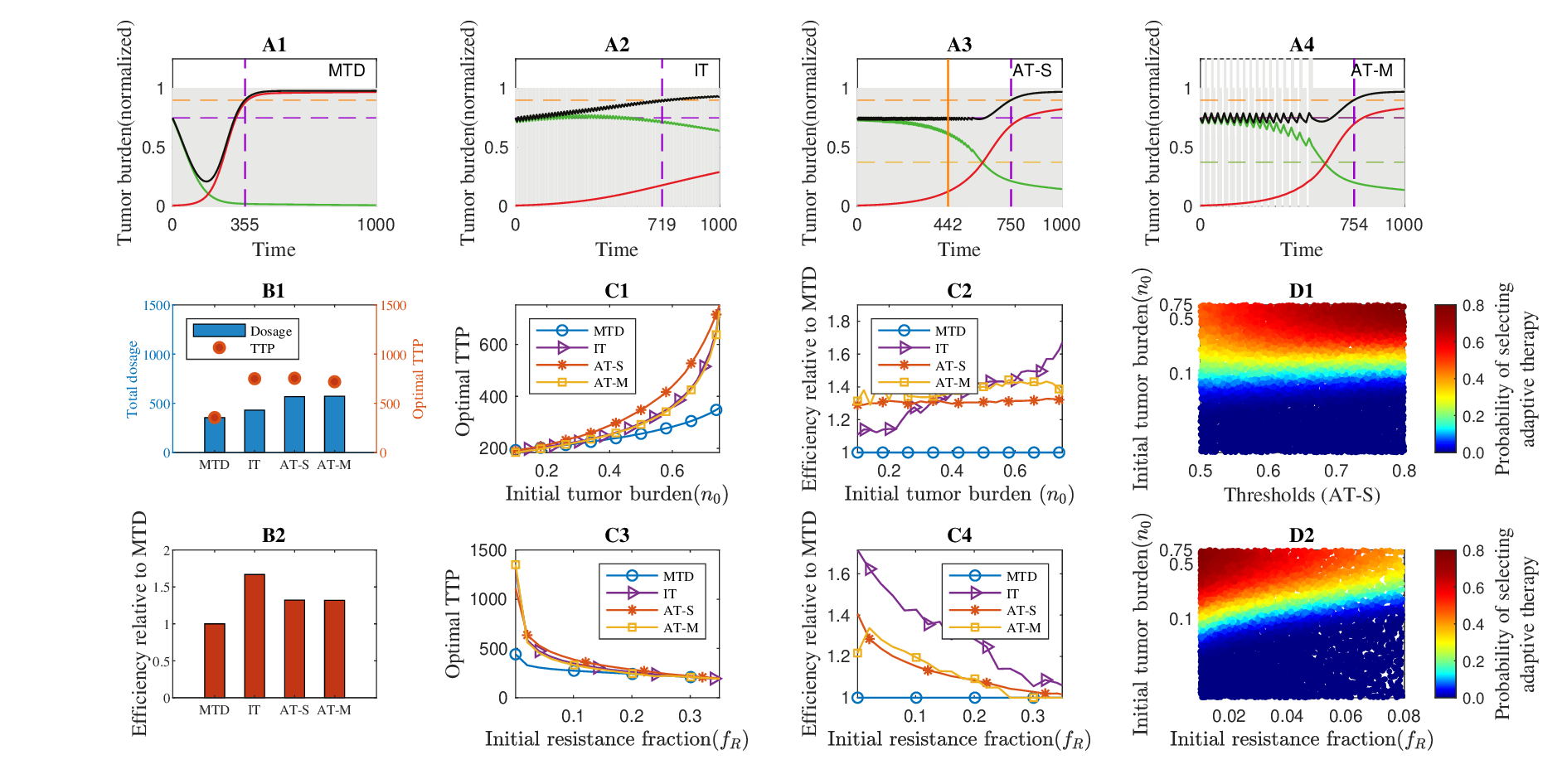}
\caption{Comparison of optimal outcomes under different therapeutic strategies and success landscapes of adaptive therapy. 
\textbf{(A1)}-\textbf{(A4)} Tumor evolution dynamics under four therapeutic strategies with optimal treatment-holiday thresholds, given $n_0 = 0.75$ and $f_R=0.01$. The purple dashed line indicates the TTP with optimal treatment parameters, while the orange solid line in (A2) represents the TTP with the conventional 50\% threshold ($C_\mathrm{TH0}=0.5$). Red, green, and black curves represent resistant cells, sensitive cells, and total tumor burden, respectively. Shaded regions denote periods of active treatment. For IT, the treatment interval ($T$) is $15$, and the treatment duration ($T_D$) is $6$. For AT-S, the optimal threshold is close to $1$, so here $C_\mathrm{TH0}$ is set to $0.98$. Similarly, for AT-M, the middle threshold is $C_\mathrm{TH1} = 1$, and the upper threshold is $C_\mathrm{TH2} = 1.05$.
\textbf{(B1)} Cumulative drug toxicity and the corresponding TTP for the four therapeutic strategies. The cumulative toxicity is quantified by the total drug dosage administered before disease progression, given by equation \eqref{eq:total dosage and efficiency}.
\textbf{(B2)} Therapeutic efficiency index, defined as the TTP-to-toxicity ratio as equation \eqref{eq:total dosage and efficiency}. Here the efficiency indices are displayed as normalized to that of MTD.
\textbf{(C1)}-\textbf{(C2)} Effects of varying the initial tumor burden on TTP and the efficiency of the four treatment strategies.
\textbf{(C3)}-\textbf{(C4)} Effects of varying the initial proportion of resistant cells on TTP and the efficiency of the four treatment strategies.
\textbf{(D1)}-\textbf{(D2)} Success probability landscapes of AT-S as a function of the initial tumor state and treatment threshold. The color at each point represents the probability of AT-S improving treatment outcomes compared to MTD. (D1) illustrates the dependence of success probability on the initial tumor burden and treatment-holiday threshold with randomly varying $f_R$, $r_S$, and $r_R$, which simulates the uncertainty in tumor proliferation and treatment decisions; while (D2) shows its dependence on the initial fraction of resistant cells and initial tumor burden with randomly varying $C_\mathrm{TH0}$, $r_S$, and $r_R$.}
\label{fig:T:equal-3}
\end{figure}

For intermittent therapy, the treatment cycle period ($T$) is examined to range from $7$ to $63$ time units, with drug administration lasting from $7$ to $21$ time units ($T_D$) per cycle. Since this study does not focus on identifying these optimal parameters of IT, computational details are omitted. The result shows that the suboptimal period is $T=15$ with $T_D=9$, simulating a regimen in which the patient undergoes both treatment and rest periods within a feasible schedule (Table~\ref{tab:parameters} and equations~\eqref{eq:D:IT}). For dose-skipping adaptive therapy, the treatment-holiday threshold is set to $C_\mathrm{TH0}=0.98$, while for dose-modulating adaptive therapy, $C_\mathrm{TH1} = 1$ and $C_\mathrm{TH2} = 1.05$ are selected as the suboptimal treatment parameters (Figure~\ref{fig:T:equal-3}A1-A4).

In the uniform-improve scenario for adaptive therapy, a comparison of the suboptimal outcomes across the four treatment strategies (Figure~\ref{fig:T:equal-3}A1-A4) reveals that both adaptive therapy strategies result in a longer TTP than intermittent therapy, while intermittent therapy still outperforms MTD. Furthermore, when comparing the two adaptive therapy strategies, AT-M slightly outperforms AT-S. However, considering the operational complexity of AT-M, AT-S is primarily considered in subsequent discussions to balance efficacy with practical feasibility.

From an efficiency perspective, due to factors such as drug side effects, a higher total drug dosage is not necessarily preferable. Therefore, based on the principle of maximizing TTP with the minimum amount of drug, the contribution of unit dosage to TTP is considered a measure of treatment strategy efficiency (equation~\eqref{eq:total dosage and efficiency}, Figure~\ref{fig:T:equal-3}B1-B2):

\begin{equation}
    \text{Total dosage} = \int_0^{\mathrm{TTP}} D(t) \, \mathrm{d}t, \quad \text{Efficiency} = \frac{\text{TTP}}{\text{Total dosage}} \label{eq:total dosage and efficiency}
\end{equation}

It is observed that MTD requires the least total drug dosage (Figure~\ref{fig:T:equal-3}B1), but this is primarily due to its short TTP. Consequently, MTD's efficiency is the lowest among the four treatment strategies (Figure~\ref{fig:T:equal-3}B2). Notably, when efficiency is used as the evaluation criterion, intermittent therapy performs the best (Figure~\ref{fig:T:equal-3}B2). This is because intermittent therapy, compared to adaptive therapy, has longer treatment holidays with a relatively smaller decrease in TTP, resulting in a higher contribution of unit drug dosage to TTP (Figure~\ref{fig:T:equal-3}B2). This suggests that intermittent therapy is a reasonable option, particularly for patients with lower drug tolerance, which is consistent with clinical practice.

Furthermore, to compare the robustness of the four treatments, the stability of TTP and treatment efficiency under uncertainties in patient physiological conditions and tumor status monitoring is examined. The impact of different initial tumor states on TTP and efficiency under the suboptimal parameters for the four strategies is analyzed (Figure~\ref{fig:T:equal-3}C1-C4). For different initial tumor burdens, the trends in TTP for the four treatment strategies under the suboptimal parameters are similar; TTP increases with the tumor burden, which is related to the criteria for determining tumor progression. Since progression is determined based on the initial tumor burden, a larger initial tumor burden necessitates a higher burden for progression to occur, leading to this seemingly counterintuitive result (Figure~\ref{fig:T:equal-3}C1). From this perspective, some researchers have proposed a ``modifying adaptive therapy'' strategy, which delays treatment until the tumor reaches a certain size \cite{Hansen2020}.

Moreover, intermittent therapy shows significant variability with different initial tumor burdens (Figure~\ref{fig:T:equal-3}C2). When the tumor burden is small, intermittent therapy is less efficient than adaptive therapy. As the tumor burden increases, this relationship reverses. Therefore, the robustness of intermittent therapy’s efficiency is inferior to that of adaptive therapy. This suggests that when adopting intermittent therapy, a more precise assessment of the patient's specific condition is necessary.

On the other hand, both adaptive therapy strategies and intermittent therapy show similar trends in TTP and efficiency with varying initial proportions of resistant cells (Figure~\ref{fig:T:equal-3}C3-C4), indicating that the optimal TTP is more influenced by cell composition when the proportion of resistant cells is very low. Notably, when the initial proportion of resistant cells is high, the TTP and efficiency of different treatment strategies are close to those of MTD. This is because, when the initial proportion of resistant cells is high, the determining factor for TTP is no longer the competition between sensitive and resistant cells but rather the proliferation of resistant cells themselves, which reduces the differences between treatment strategies.

Furthermore, similar to the previous results, considering the uncertainty in real-world scenarios, the probability of adaptive therapy being the best strategy is explored under conditions of random tumor growth rates and status monitoring. The probability that adaptive therapy results in a longer TTP than both MTD and intermittent therapy is shown in Figure~\ref{fig:T:equal-3}D1-D2. In real scenarios, since the tumor growth rate is unknown and tumor composition is difficult to obtain, the dependence of TTP on treatment-holiday thresholds, initial tumor burden, and initial resistant cell proportion is explored under random variations of other factors. Figure~\ref{fig:T:equal-3}D1 shows the dependence of the probability of adaptive therapy outperforming other treatment strategies on different thresholds ($C_\mathrm{CH0}$ and initial tumor burden ($n_0$), with the other parameters, including the initial resistant cell proportion ($f_R$) and the intrinsic growth rates of tumor cells ($r_S$ and $r_R$) varying randomly. Figure~\ref{fig:T:equal-3}D2 shows the dependence of the probability on the initial resistant fraction ($f_R$) and the initial tumor burden ($n_0$), with other parameters, $C_\mathrm{TH0}$, $r_S$ and $r_R$, varying randomly. 

It can be observed that the probability of selecting adaptive therapy increases significantly when the initial tumor burden is larger. However, when the initial tumor burden is smaller, adaptive therapy does not consistently improve TTP, which aligns with previous findings. Notably, the selection of the treatment-holiday threshold and tumor composition has a weaker impact on the selection of adaptive therapy than the initial tumor burden, indicating that when detailed information about tumor components is unavailable, parameters such as the treatment threshold have less influence, with the initial tumor burden playing a key role. On the other hand, TTP is highly sensitive to the selection of the threshold, among other factors (Figures \ref{fig:T:equal-1} and \ref{fig:S:equal:supplement-1}-\ref{fig:S:equal:supplement-5}). In other words, the selection of the optimal threshold is also dependent on various tumor states. Therefore, when other parameters are randomly varied, the impact of the threshold on TTP becomes overshadowed by fluctuations in these parameters.

These findings are consistent with clinical observations, emphasizing the importance of a precise understanding of tumor dynamics, particularly intrinsic growth rates. If such tumor information is available, treatment strategy selection can be significantly improved, highlighting the critical importance of accurate tumor characterization, especially regarding tumor dynamics.

\subsection{Conditional superiority landscapes of adaptive therapy compared to other strategies under non-neutral competition
}
The competitive interaction between sensitive ($S$) and resistant ($R$) cells, governed by coefficients $\alpha$ and $\beta$, fundamentally determines tumor evolution dynamics under therapeutic stress \cite{Park2023c,Gatenby2018,Wang2024a,Parker2021}. Four distinct regimes emerge based on relative competition strengths: 1) weak competition ($\alpha < 1$, $\beta < 1$), 2) strong competition ($\alpha > 1$, $\beta > 1$), 3) asymmetric competition with $\alpha < 1 < \beta$, or 4) $\beta < 1 < \alpha$. Detailed analyses of $\alpha < 1 < \beta$ and $\beta < 1 < \alpha$ cases demonstrate that therapeutic strategies can induce resistant population extinction under specific conditions \cite{McGehee2024.02.19.580916}.

Under strong competition ($\alpha > 1$, $\beta > 1$), untreated tumors evolve toward exclusive dominance of either $S$ or $R$, depending on initial composition (Figure~\ref{fig:T:geq-1}A), mathematically equivalent to equations \eqref{eq:core-model:1}--\eqref{eq:core-model:3} possessing two stable steady states. Continuous MTD administration drives the tumor toward a stable steady state where sensitive and resistant cells maintain a fixed ratio with unchanged tumor burden (Figure~\ref{fig:T:geq-1}B). In the absence of treatment, the tumor's final state depends on its initial condition, mathematically determined by the basin of attraction in which the system starts. Thus, sensitive cell dominance at initiation leads to resistant cell elimination without treatment, requiring only therapeutic containment to prevent long-term progression. Conversely, initial resistant cell dominance under MTD results in a stable tumor burden with constant cellular proportions. The initial tumor burden determines whether disease progression occurs. Even when exceeding progression thresholds, tumor burden stabilizes below carrying capacity. This property renders the "uniformly-improve" scenario inapplicable for strong competition systems, as therapeutic outcomes primarily depend on initial tumor state rather than threshold calibration (Figure~\ref{fig:T:geq-1}C). That is, as long as the long-term steady state under treatment remains acceptable (Figure~\ref{fig:T:geq-1}B), treatment strategy selection becomes less critical for strong competition cases.

\begin{figure}[phtb]
\centering
\includegraphics[width=\textwidth]{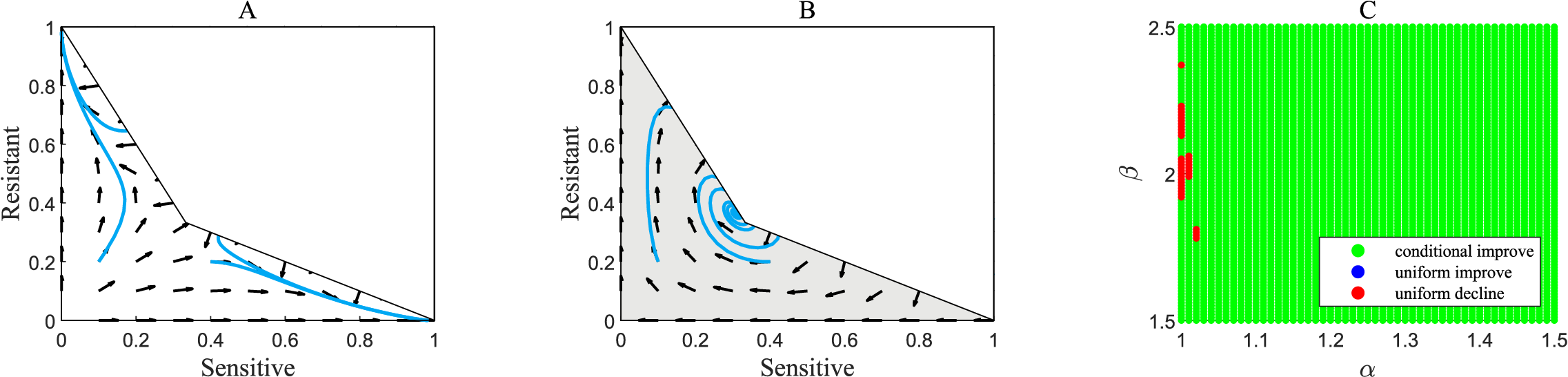}
\caption{Tumor evolution characteristics under strong competition in the presence of therapy.
\textbf{(A)} Phase diagram and vector fields of tumor evolution without treatment. The blue lines indicate that the long-term tumor state depends on the initial conditions. The blank region indicates that the effective tumor burden exceeds the carrying capacity ($S+\alpha R>K$ or $\beta S+R>K$).
\textbf{(B)} Phase diagram and vector fields of tumor evolution with treatment. The long-term tumor dynamics converge to a state characterized by a heterogeneous composition of sensitive and resistant cells. The other features are similar to those in panel (A).
\textbf{(C)} The dependence of the three scenarios on the competition coefficients $\alpha$ and $\beta$, where, in fact, only two scenarios occur: uniform-decline (red) and conditional-improve (green), with the uniform-improve scenario absent. Here the initial tumor burden is taken as $n_0=0.75$ and the fraction of resistant cells is taken as $f_R=0.01$ with the other parameters as shown in Table \ref{tab:parameters}.
}
\label{fig:T:geq-1}
\end{figure}

Unlike neutral or strong competition, weak competition ($\alpha < 1$, $\beta < 1$) permits coexistence equilibria regardless of initial states (Figure~\ref{fig:T:leq_1}A), while continuous treatment induces resistant dominance through Darwinian selection (Figure~\ref{fig:T:leq_1}B). Similar to neutral competition, weak competition exhibits three distinct therapeutic response scenarios (Figure~\ref{fig:T:leq_1}C), necessitating strategic treatment selection.

Given clinical uncertainties in weak competition systems, comprehensive parameter-space exploration proves unnecessary and inefficient. The analysis therefore focuses on measurable initial tumor burden ($n_0$) to evaluate adaptive therapy superiority probability under clinical uncertainty, following previous subsections (Figure \ref{fig:T:equal-3}D1-D2). Less accessible parameters including resistant cell fraction ($f_R$), thresholds ($C_\mathrm{TH0}$), and growth rates ($r_S$, $r_R$) are randomized (Figure~\ref{fig:T:leq_1}D0-D8). Analysis reveals enhanced adaptive therapy superiority with decreasing $\alpha$ and increasing $\beta$ (Figure~\ref{fig:T:leq_1}D1-D8), consistent with its premise of leveraging sensitive cells to suppress resistant populations. This relationship weakens when resistant cells evade suppression.

Notably, large initial tumor burdens in strong competition systems reduce adaptive therapy selection probability (Figure~\ref{fig:T:leq_1} upper-right quadrant), primarily due to: 1) bistability-mediated tumor burden homeostasis across treatments negating adaptive advantages, and 2) extended progression timelines for large burdens due to progression criteria, diminishing therapeutic differentiation. Surgical intervention may become preferable when tumor burdens approach carrying capacity. Crucially, adaptive therapy selection probability increases with initial tumor burden except at extreme values (Figure~\ref{fig:T:leq_1}D0), paralleling neutral competition observations (Figures~\ref{fig:T:equal-1}E and \ref{fig:T:equal-3}D1-D2). These findings confirm tumor burden's primacy as the clinically actionable determinant while suggesting benefits from precise competition-strength measurements.

\begin{figure}[phtb]
\centering
\includegraphics[width=\textwidth]{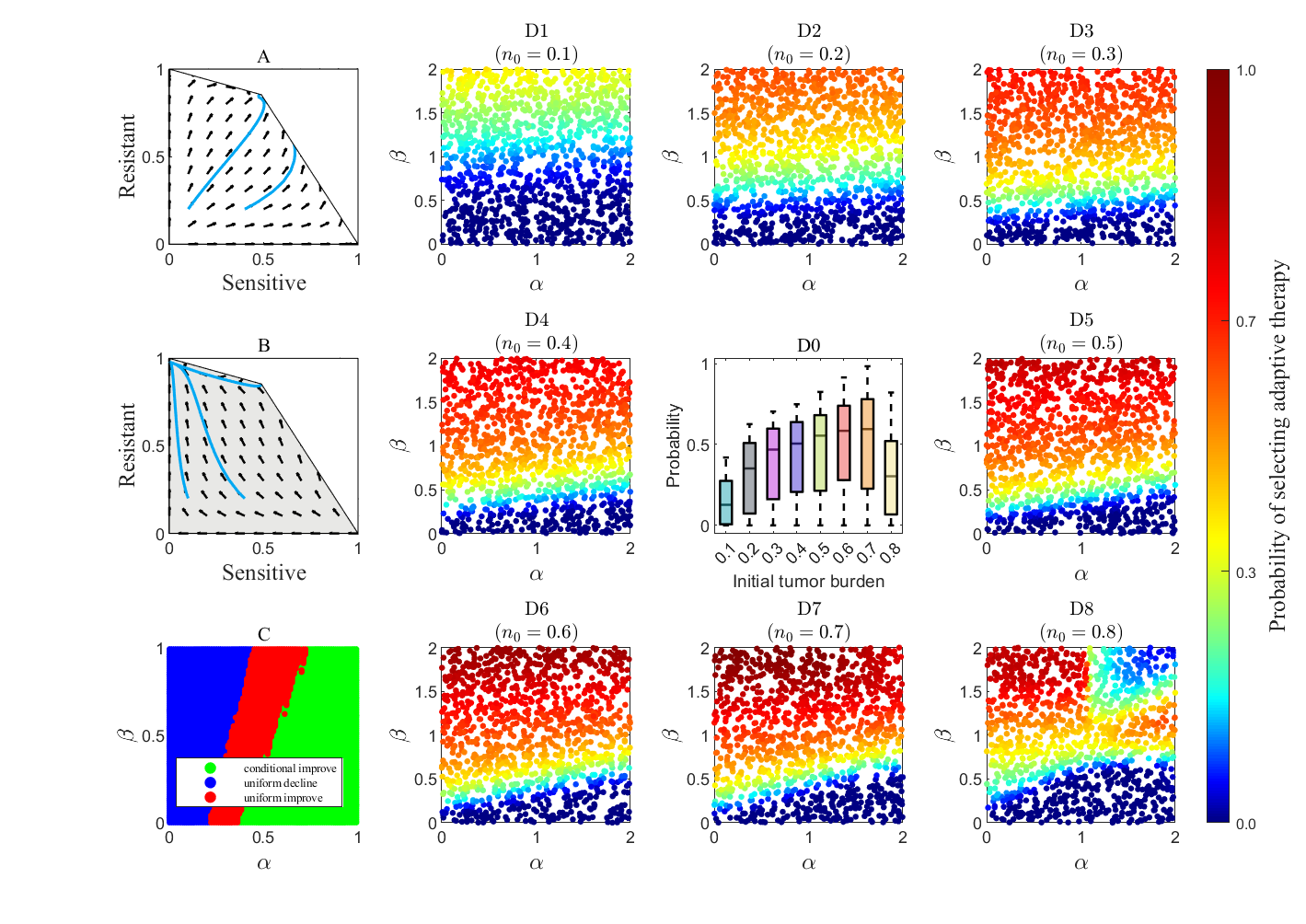}
\caption{
Effectiveness of adaptive therapy under conditions of non-neutral competition.
\textbf{(A)} Phase diagram and vector field of the dynamics of sensitive and resistant cells in the absence of treatment under weak competition. The blank region represents scenarios where the effective tumor burden exceeds the carrying capacity ($S+\alpha R>K$ or $\beta S+R>K$). The blue lines represent tumor evolution trajectories with different initial tumor states.  
\textbf{(B)} Phase diagram and vector field of the dynamics with treatment. The blue lines show that tumor evolution trajectories all move toward a resistant-dominant state. The other elements in the figure are similar to those in panel (A).  
\textbf{(C)} Dependence of the three scenarios on the competition coefficients, with the initial tumor burden $n_0=0.75$, the fraction of resistant cells $f_R=0.01$, and other parameters set as in Table \ref{tab:parameters}.  
\textbf{(D1-D8)} Dependence of the probability of selecting adaptive therapy on competition coefficients, under different initial tumor burdens. Here, the selection of adaptive therapy refers to the condition where the TTP of adaptive therapy is better than that of other strategies. The fraction of resistant cells $f_R$, cell division rates $r_S$ and $r_R$, and treatment-holiday thresholds $C_\mathrm{TH0}$ are randomly sampled to represent various individual conditions. The sample size is $1200$ for each pair of competition coefficient values $\alpha$ and $\beta$.  
\textbf{(D0)} Overall dependence of the probability of selecting adaptive therapy on the initial tumor burden. Boxplots show the aggregated probabilities calculated from panels (D1)-(D8). \textbf{Note} that panels (D0)-(D8) include both weak and strong competition conditions.
}
\label{fig:T:leq_1}
\end{figure}

\subsection{The nonexistence of practical optimal therapeutic strategies}
As shown in Figure \ref{fig:T:equal-1}C3, TTP is positively correlated with the number of treatment holidays. Motivated by this observation, this subsection presents theoretical analysis of the optimal control problem that maximizes TTP through optimization of the dose curve $D(t)$, as formalized in equations \eqref{eq:objective}-\eqref{eq:path constraint}:
\begin{equation}\label{eq:objective}
    \sup_{D(\cdot)} J[D(\cdot)] = \int_{0}^{T}  1  \ \mathrm{d}t = T
\end{equation}
subject to
\begin{align}
    \dfrac{\mathrm{d}S}{\mathrm{d}t} &= r_{S} \left(1 - \dfrac{S + \alpha R}{K}\right) \left(1 - \gamma \dfrac{D(t)}{D_0}\right) S - d_{S}S, \label{eq:dSdt} \\
    \dfrac{\mathrm{d}R}{\mathrm{d}t} &= r_{R} \left(1 - \dfrac{\beta S + R}{K}\right) R - d_{R}R, \label{eq:dRdt} \\
    N(t) &= S(t) + R(t), \label{eq:Ndef} \\
    S(0) &= s_0K, \ R(0) = r_0K, \label{eq:initial} \\
    N(T) &= (1 + \varepsilon) n_0 K, \quad \text{where} \ n_0 = s_0 + r_0, \label{eq:terminal} \\
    N(t) &< (1 + \varepsilon) n_0 K, \quad \text{for} \ t \in [0, T), \label{eq:path constraint}
\end{align}
with parameters satisfying the following constraints \eqref{eq:param1}-\eqref{eq:param4}.
\begin{align}
    &\alpha, \beta, K, r_S, r_R, d_S, d_R > 0, \label{eq:param1} \\
    &\gamma > 1, \quad 0 < \varepsilon \leq 0.5, \label{eq:param2} \\
    &0 \leq D_{\mathrm{min}} \leq D(t) \leq D_{\mathrm{max}} \leq D_0, \label{eq:param3} \\
    &s_0 + \alpha r_0 < 1, \quad \beta s_0 + r_0 < 1. \label{eq:param4}
\end{align}

This optimal control problem aims to maximize Time-to-Progression (TTP) through an optimized dosing schedule, with tumor growth governed by Lotka-Volterra dynamics. For this formulation, Theorem \ref{thm:0:1:full} establishes the nonexistence of practically implementable optimal controls under the specified conditions.

\begin{theorem}\label{thm:0:1:full}
    Consider the optimal control problem defined by \eqref{eq:objective}-\eqref{eq:param4}, where $D^*(t)$ denotes an optimal control maximizing $J[D(\cdot)]$. Assume that $S'(0)>0$ and $R'(0)>0$ when $D(t)\equiv D_{\mathrm{min}}=0$ and $D^*(t)$ is an optimal control. Then any optimal control $D^*(t)$ maximizing $J[D(\cdot)]$ necessarily has infinitely many points of discontinuity if either of the following conditions is satisfied.
    \begin{enumerate}
        \item $\alpha = \beta = 1$, \quad or
        \item $1 \leq \beta \leq \alpha$ \quad and \quad $n_0 < \dfrac{1}{(1+\varepsilon)(1+\beta)}$.
    \end{enumerate}
\end{theorem}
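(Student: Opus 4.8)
The plan is to argue by contradiction: assume some optimal control $D^{*}(\cdot)$ has only finitely many discontinuities, so that $[0,T]$ splits into finitely many subintervals on each of which $D^{*}$ is either saturated ($D^{*}\equiv D_{\mathrm{min}}=0$ or $D^{*}\equiv D_{\mathrm{max}}$) or singular, and then exhibit an admissible control with strictly larger time‑to‑progression, contradicting optimality. The first reduction uses the fact that the dose enters \eqref{eq:dSdt} linearly and only through $S$: the Pontryagin Hamiltonian $H=1+\lambda_{S}\dot S+\lambda_{R}\dot R$ is affine in $D$, with switching function proportional to $-\lambda_{S}\,r_{S}\bigl(1-(S+\alpha R)/K\bigr)S$. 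Since the coefficient of $D$ keeps a fixed sign while $S>0$ and $N<K$, candidate optimal controls are bang--bang except on arcs where $\lambda_{S}\equiv 0$; hence ``finitely many discontinuities'' is equivalent to ``finitely many bang arcs together with finitely many singular arcs,'' and it suffices to rule this out.

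Next I would establish the monotone growth of the resistant population. Under either hypothesis the factor $1-(\beta S+R)/K$ stays bounded below by a positive constant along every admissible trajectory: in case (1) because $\beta S+R=N<(1+\varepsilon)n_{0}K$ sits below the resistant carrying fraction, and in case (2) because $\beta\ge 1$ gives $\beta S+R\le\beta N$ and the hypothesis $n_{0}<\frac{1}{(1+\varepsilon)(1+\beta)}$ forces $\beta S+R<\frac{\beta}{1+\beta}K$; together with the assumption $R'(0)>0$ at $D\equiv 0$, equation \eqref{eq:dRdt} then yields $\dot R>0$ throughout $[0,T)$. Thus $R$ is strictly increasing and can serve as a monotone ``clock,'' progression $N\to(1+\varepsilon)n_{0}K$ being forced by the inexorable growth of $R$; the assumption $S'(0)>0$ guarantees a genuine initial holiday phase to exploit. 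Reparametrizing by $R$ gives $T=\int_{R_{0}}^{R(T)}\bigl(R[\,r_{R}(1-(\beta S+R)/K)-d_{R}]\bigr)^{-1}\,\mathrm{d}R$, which is increased both by letting $R(T)$ grow and by keeping $N$ (hence the competition felt by $R$) as large as the strict path constraint \eqref{eq:path constraint} allows, i.e.\ by holding $N$ just below $(1+\varepsilon)n_{0}K$.

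The heart of the argument is that tracking this threshold is incompatible with finitely many switches. Maintaining $\dot N=0$ on a boundary arc requires a singular dose $D_{s}$ solving $r_{S}\bigl(1-(S+\alpha R)/K\bigr)\bigl(1-\gamma D_{s}/D_{0}\bigr)S-d_{S}S+\dot R=0$; solving for $D_{s}$ shows that as $S\downarrow 0$ — which must occur, since $R$ increases while $N<(1+\varepsilon)n_{0}K$, forcing $S=N-R\to 0$ — the required $1-\gamma D_{s}/D_{0}\to-\infty$, so $D_{s}$ eventually exceeds $D_{\mathrm{max}}$ and the singular arc becomes inadmissible in the terminal phase. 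Consequently no bang--singular structure with finitely many pieces can keep $N$ near the threshold as $S\downarrow 0$: any finite‑switch control leaves $N$ bounded away from $(1+\varepsilon)n_{0}K$ on a terminal subinterval, and I would show that replacing the final saturated arc by a finer sequence of on/off pulses pulls $N$ closer to the threshold, strictly lowers $r_{R}(1-(\beta S+R)/K)-d_{R}$, and so strictly increases $T$ via the integral above, while preserving \eqref{eq:path constraint} precisely because $R$ is increasing and $D_{s}$ is saturating.

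Since this improvement can be iterated, no finite‑switch control can be optimal, so any optimal $D^{*}$ must switch infinitely often; under hypotheses (1)--(2) the switches accumulate in the terminal phase $S\downarrow 0$ in the manner of the Fuller phenomenon. The main obstacle is exactly this last step: making the ``one additional switch strictly helps'' estimate quantitative and uniform, and proving that the switches genuinely \emph{accumulate} (rather than merely being numerous but finite). This is where the monotonicity of $R$ and the blow‑up of the singular dose supplied by conditions (1)--(2), together with $S'(0),R'(0)>0$, are essential, since they both guarantee the unavoidable depletion of $S$ and the saturation of the control authority that drives the chattering.
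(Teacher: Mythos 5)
Your overall strategy --- assume finitely many discontinuities and exhibit a strictly better admissible control --- is the same contradiction scheme the paper uses, but your proposal defers precisely the decisive step, and that step is the entire content of the theorem. You yourself flag as the ``main obstacle'' the task of making ``one additional switch strictly helps'' quantitative; the paper's proof \emph{is} that quantitative estimate, and it is local rather than global. Concretely: Lemma \ref{thm:lemma:lambda_S(T^*)<0, N(T^*)>0} first rules out singular arcs (since $1-(S+\alpha R)/K>0$ and $S>0$, $\Phi(t)=0$ forces $\lambda_S(t)=0$, incompatible with the costate equations and the transversality value $\lambda_S(T^*)=-1/N^{*\prime}(T^*)<0$), so a finite-switch optimal control must end with a $D_{\max}$ arc. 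Lemmas \ref{thm:lemma:terminal R(t) S(t):1} and \ref{thm:lemma:terminal R(t) S(t):2} then give the terminal inequality $r_R\beta R^*(t)>\bar{D}\,r_S S^*(t)$ near $T^*$; this is where hypotheses (1) and (2) actually enter --- in case (2) the bound $n_0<\frac{1}{(1+\varepsilon)(1+\beta)}$ is used to fix the sign of the term $\frac{(\beta-\alpha)S}{K}+(1-\beta)\frac{K-(\beta+1)R}{K}$, not to make $R$ monotone. Finally, the paper inserts a treatment holiday on $[T^*-2\Delta t,\,T^*-\Delta t]$ inside the last $D_{\max}$ arc and expands both trajectories to second order, obtaining $N^*(T^*)-N_{\Delta t}(T^*)=C(\Delta t)^2+o\bigl((\Delta t)^2\bigr)$ with $C>0$ (proportional to $d_S+\bar{D}r_S\bigl(1-\frac{S^*_0+\alpha R^*_0}{K}\bigr)+\frac{r_R\beta R^*_0-\bar{D}r_S S^*_0}{K}$), so the perturbed control violates progression at $T^*$ and progresses strictly later --- the contradiction. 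Nothing in your pulse-refinement sketch produces an inequality of this kind, so the proposal stops exactly where the proof has to begin.

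Beyond the deferred main step, several of your supporting claims fail. First, global monotonicity of $R$ (``$\dot R>0$ throughout'') is unjustified in case (1): there is no smallness hypothesis on $n_0$ there, so $(1+\varepsilon)n_0K$ need not lie below the resistant carrying level $\bigl(1-d_R/r_R\bigr)K$, and $R'(0)>0$ under $D\equiv 0$ does not propagate along the controlled trajectory; hence your reparametrization $T=\int(\dot R)^{-1}\,\mathrm{d}R$ and the ``monotone clock'' picture are not available (the paper never needs them --- it only uses positivity of $S'$, $R'$ for the \emph{drug-free} flow, to control the sign of the switching function). Second, ``any finite-switch control leaves $N$ bounded away from $(1+\varepsilon)n_0K$ on a terminal subinterval'' is false as stated: the terminal constraint \eqref{eq:terminal} forces $N(T^*)=(1+\varepsilon)n_0K$ with $N$ continuous. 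Third, your mechanism for excluding singular arcs (blow-up of the tracking dose as $S\downarrow 0$) rests on the unproven claim that $S\to 0$ must occur, and it is in any case unnecessary, since singular arcs are excluded by the switching-function/costate argument above. In short, the proposal is a reasonable heuristic outline that matches the paper's high-level plan, but it does not contain a proof of the theorem.
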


The conditions $S'(0) > 0$ and $R'(0) > 0$ under $D(t) \equiv D_{\mathrm{min}} = 0$ are biologically reasonable, as they indicate that the tumor burden remains below the microenvironment's carrying capacity, enabling tumor growth in the absence of therapy. Furthermore, Theorem \ref{thm:0:1:full} combined with Pontryagin's Maximum Principle implies that any optimal control (if it exists) must undergo infinitely many switches between maximum and minimum dosing levels, rendering it clinically impractical.

Before proving the theorem, some typical calculations for the optimal control problem and lemmas are needed.

Firstly, the necessary condition for the optimal control problem should be calculated using Pontryagin's Maximum Principle. The Hamiltonian $\mathcal{H}$ is defined as equation \eqref{eq:Hdef}.
\begin{equation}
    \begin{array}{l}
        \mathcal{H}(S,R,\lambda_S,\lambda_R,D) = 1 + \lambda_S \dfrac{\mathrm{d}S}{\mathrm{d}t} + \lambda_R \dfrac{\mathrm{d}R}{\mathrm{d}t}.   \\[0.3cm]
        = 1 + \lambda_S \left[ r_{S} \left(1 - \dfrac{S + \alpha R}{K}\right) \left(1 - \gamma \dfrac{D}{D_0}\right) S - d_{S}S \right] + \lambda_R \left[ r_{R} \left(1 - \dfrac{\beta S + R}{K}\right) R - d_{R}R \right].
    \end{array}\label{eq:Hdef}
\end{equation}
Thus, the equations for the costates are given as following eqautions \eqref{eq:dlambdadtS} and \eqref{eq:dlambdadtR}.
\begin{align}
\dfrac{\mathrm{d}\lambda_S}{\mathrm{d}t} &= -\dfrac{\partial \mathcal{H}}{\partial S} \nonumber\\
&= \lambda_S \left[ d_S - r_S \left(1 - \dfrac{2S + \alpha R}{K}\right) \left(1 - \gamma \dfrac{D(t)}{D_0}\right)   \right] + \lambda_R  r_R \dfrac{\beta R}{K}  \label{eq:dlambdadtS} \\
\dfrac{\mathrm{d}\lambda_R}{\mathrm{d}t} &= -\dfrac{\partial \mathcal{H}}{\partial R}, \nonumber\\
&= \lambda_S  r_S \dfrac{\alpha S}{K} \left(1 - \gamma \dfrac{D(t)}{D_0}\right)  + \lambda_R \left[ d_R - r_R \left(1 - \dfrac{\beta S + 2R}{K}\right)  \right]. \label{eq:dlambdadtR}
\end{align}
with the transversality conditions as boundary conditions (equations \eqref{eq:transv1}-\eqref{eq:transv3}).
\begin{eqnarray}
    \lambda_S(T) &=& \nu, \label{eq:transv1}\\
    \lambda_R(T) &=& \nu, \label{eq:transv2}
\end{eqnarray}
where $\nu$ is the Lagrange multiplier associated with the terminal constraint $N(T) = (1 + \varepsilon) n_0 K$, obtained from the terminal condition of the Hamiltonian as given by equation \eqref{eq:transv3}.
\begin{equation*}
    \mathcal{H}(S(T),R(T),\lambda_S(T),\lambda_R(T),D(T)) = 0.
\end{equation*}
So the $\nu$ should satisfies the equation
\begin{equation}
    \mathcal{H}(S(T),R(T),\nu,\nu,D(T))=0.\label{eq:transv3}
\end{equation}
According to Pontryagin's Maximum Principle, since the Hamiltonian $\mathcal{H}$ is linear in $D(t)$ (Equation \eqref{eq:Hdef} and $D^*(t)$ subjects to \eqref{eq:param3}, hence, $D^*(t)$ is determined by the sign of the switching function $\Phi(t)$ given by the equation \eqref{eq:Dopt}.
\begin{equation}
    D^*(t) = 
    \begin{cases} 
        D_{\max}, & \text{if } \Phi(t) > 0, \\
        D_{\min}, & \text{if } \Phi(t) < 0, \\
        \text{singular}, & \text{if } \Phi(t) = 0 \text{ on } [t_1, t_2],
    \end{cases}\label{eq:Dopt}
\end{equation}
where
\begin{equation}
\Phi(t) = -\lambda_S(t) \cdot r_S \left(1 - \dfrac{S(t) + \alpha R(t)}{K}\right) \gamma \dfrac{S(t)}{D_0}.\label{eq:switching}
\end{equation}

In summary, the equations for the optimal solution $D^*(t)$ of the optimal control problem (if it exists) are summarized in the following Lemma \ref{thm:lemma:necessary-condition}.
\begin{lemma}\label{thm:lemma:necessary-condition}
    If the optimal control problem \eqref{eq:objective}-\eqref{eq:param4} has a solution with optimal control $D^*(t)$, then $D^*(t)$ satisfies the following equations:
    \begin{align}
    \dfrac{\mathrm{d}S}{\mathrm{d}t} &= r_{S} \left(1 - \dfrac{S + \alpha R}{K}\right) \left(1 - \gamma \dfrac{D^*(t)}{D_0}\right) S - d_{S}S, \tag{\eqref{eq:dSdt}} \\
    \dfrac{\mathrm{d}R}{\mathrm{d}t} &= r_{R} \left(1 - \dfrac{\beta S + R}{K}\right) R - d_{R}R, \tag{\eqref{eq:dRdt}} \\
    \dfrac{\mathrm{d}\lambda_S}{\mathrm{d}t} &= \lambda_S \left[ d_S - r_S \left(1 - \dfrac{2S + \alpha R}{K}\right) \left(1 - \gamma \dfrac{D^*(t)}{D_0}\right)   \right] + \lambda_R  r_R \dfrac{\beta R}{K},  \tag{\eqref{eq:dlambdadtS}} \\
    \dfrac{\mathrm{d}\lambda_R}{\mathrm{d}t} &= \lambda_S  r_S \dfrac{\alpha S}{K} \left(1 - \gamma \dfrac{^*(t)}{D_0}\right)  + \lambda_R \left[ d_R - r_R \left(1 - \dfrac{\beta S + 2R}{K}\right)  \right], \tag{\eqref{eq:dlambdadtR}}\\
    &N(t) = S(t) + R(t), \tag{\eqref{eq:Ndef}} \\
    &S(0) = s_0K, \ R(0) = r_0K, \tag{\eqref{eq:initial}} \\
    &N(T) = (1 + \varepsilon) n_0 K, \quad \text{where} \ n_0 = s_0 + r_0, \tag{\eqref{eq:terminal}} \\
    &N(t) < (1 + \varepsilon) n_0 K, \quad \text{for} \ t \in [0, T), \tag{\eqref{eq:path constraint}}\\
    &\lambda_S(T) = \nu, \tag{\eqref{eq:transv1}}\\
    &\lambda_R(T) = \nu, \tag{\eqref{eq:transv2}}\\
    &\mathcal{H}(S(T),R(T),\nu,\nu,D^*(T)) = 0,\tag{\eqref{eq:transv3}}\\
    \Phi(t) &= -\lambda_S(t) \cdot r_S \left(1 - \dfrac{S(t) + \alpha R(t)}{K}\right) \gamma \dfrac{S(t)}{D_0},\tag{\eqref{eq:switching}}\\
    D^*(t) &= 
    \begin{cases} 
        D_{\max}, & \text{if } \Phi(t) > 0, \\
        D_{\min}, & \text{if } \Phi(t) < 0, \\
        \text{singular}, & \text{if } \Phi(t) = 0 \text{ on } [t_1, t_2].
    \end{cases}\tag{\eqref{eq:Dopt}}
\end{align}
\end{lemma}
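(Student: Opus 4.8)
The plan is to recognize Lemma~\ref{thm:lemma:necessary-condition} as the specialization of Pontryagin's Maximum Principle (PMP) to the present free-terminal-time problem, so that the proof is a systematic derivation of the stated first-order necessary conditions rather than a new structural claim. First I would reformulate \eqref{eq:objective}--\eqref{eq:path constraint} as a free-terminal-time problem in which $T$ is the first time the trajectory $(S(t),R(t))$ reaches the terminal manifold $\mathcal{M}=\{(S,R): S+R=(1+\varepsilon)n_0K\}$, with running cost $L\equiv 1$ so that $J=T$. The crucial observation is that the path constraint \eqref{eq:path constraint} holds as a \emph{strict} inequality on the open interval $[0,T)$ and becomes active only at $t=T$; hence the pure state constraint is never active on the interior, no interior costate jumps or state-constraint measure multipliers are incurred, and the standard state-unconstrained free-terminal-time PMP applies with $\mathcal{M}$ as terminal manifold.

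With this reduction, I would invoke PMP using the Hamiltonian \eqref{eq:Hdef}, working in the normal case $\lambda_0=1$ so that the running term $1$ appears explicitly; normality holds because the abnormal alternative $\lambda_0=0$ together with the transversality data would force the trivial costate, contradicting the nontriviality guaranteed by PMP. The adjoint equations $\dot\lambda_S=-\partial\mathcal{H}/\partial S$ and $\dot\lambda_R=-\partial\mathcal{H}/\partial R$ are then obtained by direct partial differentiation of \eqref{eq:Hdef}, reproducing \eqref{eq:dlambdadtS}--\eqref{eq:dlambdadtR}; these are routine calculus and I would not expand them in full.

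Next I would assemble the boundary data. Since the initial state is fixed by \eqref{eq:initial} and the terminal state lies on $\mathcal{M}$ with constant normal $\nabla(S+R)=(1,1)$, the transversality condition yields $(\lambda_S(T),\lambda_R(T))=\nu(1,1)$ for a scalar multiplier $\nu$, which is precisely \eqref{eq:transv1}--\eqref{eq:transv2}. Because the terminal time is free and the data are autonomous, the free-time transversality condition forces the Hamiltonian to vanish at $T$, giving $\mathcal{H}(S(T),R(T),\nu,\nu,D^*(T))=0$, i.e.\ \eqref{eq:transv3}. Finally, since $\mathcal{H}$ is affine in $D$ --- the sole $D$-dependence being the term $-\lambda_S r_S\bigl(1-(S+\alpha R)/K\bigr)\gamma S D/D_0$ --- the pointwise maximization of $\mathcal{H}$ over the admissible box \eqref{eq:param3} is governed by the sign of the coefficient $\Phi(t)=\partial\mathcal{H}/\partial D$ in \eqref{eq:switching}, producing the bang-bang/singular characterization \eqref{eq:Dopt}. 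Collecting the state equations \eqref{eq:dSdt}--\eqref{eq:dRdt}, the adjoint equations, the boundary and transversality relations, and the switching rule gives the full system asserted in the Lemma.

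I expect the main obstacle to be the rigorous justification that the state constraint \eqref{eq:path constraint} contributes nothing to the necessary conditions. The resolution rests on interpreting $T$ as the first-hitting time of $\mathcal{M}$, so that strict feasibility on $[0,T)$ is automatic and the only binding terminal relation is the manifold condition \eqref{eq:terminal}; this is what legitimizes using the clean free-terminal-time PMP instead of the state-constrained maximum principle with its attendant measure multipliers. A secondary technical point is confirming normality ($\lambda_0=1$), which I would dispatch via the nontriviality clause of PMP as indicated above. Everything else is mechanical differentiation and bookkeeping.
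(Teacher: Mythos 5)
Your proposal is correct and follows essentially the same route as the paper: the paper's derivation is exactly a direct application of Pontryagin's Maximum Principle --- forming the Hamiltonian \eqref{eq:Hdef}, computing the adjoint equations \eqref{eq:dlambdadtS}--\eqref{eq:dlambdadtR}, imposing the transversality conditions \eqref{eq:transv1}--\eqref{eq:transv3} with the free-terminal-time condition $\mathcal{H}(T)=0$, and reading off the bang-bang/singular law \eqref{eq:Dopt} from the affine dependence of $\mathcal{H}$ on $D$. Your two added technical remarks --- that \eqref{eq:path constraint} is inactive on $[0,T)$ since $T$ is a first-hitting time of the terminal manifold (so no state-constraint measure multipliers arise) and the normality check $\lambda_0=1$ --- are points the paper leaves implicit, and they only strengthen the argument.
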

The Lemma \ref{thm:lemma:necessary-condition} also provides a numerical method based on a forward-backward sweep scheme if the optimal control exists.

For simplicity and considering biological realism, hereafter in this section, we make the following assumptions \eqref{eq:Dmin,barD,initial values assumption:1}-\eqref{eq:Dmin,barD,initial values assumption:2}.
\begin{equation}
    D_{\mathrm{min}}=0,\ 1-\gamma\dfrac{D_{\mathrm{max}}}{D_0}<0,\label{eq:Dmin,barD,initial values assumption:1}
\end{equation}
and
\begin{equation}
    S(0),\ R(0),\ S'(0),\ R'(0)>0 \mbox{ when } D(t)\equiv D_{\mathrm{min}}=0.\label{eq:Dmin,barD,initial values assumption:2}
\end{equation}
Biologically, these assumptions indicate that the tumor burden is within the carrying capacity of the microenvironment, enabling tumor growth. 

In addition, to simplify notation, the following symbols are introduced throughout this section. 1) let $\bar{D}=-\left(1-\gamma\dfrac{D_{\mathrm{max}}}{D_0}\right)>0$ denote the factor in \eqref{eq:dSdt}. 2) let $S^*(t),\ R^*(t), N^*(t)=S^*(t)+R^*(t)$ be the state solution for the optimal control problem \eqref{eq:objective}-\eqref{eq:param4}. And 3) let $T^*$be the optimal terminal time (objective value), i.e., the time point such that $N^*(T^*)=(1+\varepsilon)n_0K$ if the optimal control exists.

Based on the necessary condition, more properties of the optimal control $D^*(t)$ can be obtained as Lemma \ref{thm:lemma:lambda_S(T^*)<0, N(T^*)>0}-\ref{thm:lemma:terminal R(t) S(t):2}.
\begin{lemma}\label{thm:lemma:lambda_S(T^*)<0, N(T^*)>0}
    Under the condition of Theorem \ref{thm:0:1:full}, if there exists an optimal control for the optimal control problem \eqref{eq:dSdt}-\eqref{eq:param4}, the costate corresponding to the $S$ variable must be negative at the terminal point $T^*$, $\lambda_S(T^*)$ must be negative and the derivative $\left.\dfrac{\mathrm{d}N^*}{\mathrm{d}t}\right|_{t=T^*}$ must be positive.
\end{lemma}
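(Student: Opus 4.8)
The plan is to derive both sign claims from a single scalar identity: the free-terminal-time transversality condition \eqref{eq:transv3}, which states that the Hamiltonian vanishes at $T^*$. Everything in this lemma is a consequence of combining that identity with the path constraint \eqref{eq:path constraint}, so no new dynamics or estimates are needed beyond the necessary conditions already collected in Lemma \ref{thm:lemma:necessary-condition}.

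First I would use the two transversality conditions \eqref{eq:transv1}--\eqref{eq:transv2}, which force $\lambda_S(T^*)=\lambda_R(T^*)=\nu$. Substituting these equal costates into the Hamiltonian \eqref{eq:Hdef} collapses the two costate-weighted drift terms onto the total burden, and invoking $\mathcal{H}(T^*)=0$ from \eqref{eq:transv3} gives
\[
1+\nu\left(\left.\frac{\mathrm{d}S^*}{\mathrm{d}t}\right|_{T^*}+\left.\frac{\mathrm{d}R^*}{\mathrm{d}t}\right|_{T^*}\right)=1+\nu\left.\frac{\mathrm{d}N^*}{\mathrm{d}t}\right|_{T^*}=0,
\]
so that $\nu\,\dfrac{\mathrm{d}N^*}{\mathrm{d}t}\big|_{T^*}=-1$. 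Two facts are immediate from this identity: neither factor can vanish, and the two factors carry opposite signs.

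Next I would pin down the sign of $\dfrac{\mathrm{d}N^*}{\mathrm{d}t}\big|_{T^*}$ using the path constraint \eqref{eq:path constraint} together with the terminal condition \eqref{eq:terminal}. Since $N^*(t)<(1+\varepsilon)n_0K$ for every $t\in[0,T^*)$ while $N^*(T^*)=(1+\varepsilon)n_0K$, the one-sided difference quotient $\big(N^*(T^*)-N^*(t)\big)/(T^*-t)$ is nonnegative for $t<T^*$; passing to the limit yields the nonstrict bound $\dfrac{\mathrm{d}N^*}{\mathrm{d}t}\big|_{T^*}\ge 0$. Combining this with the non-vanishing established above upgrades it to the strict inequality $\dfrac{\mathrm{d}N^*}{\mathrm{d}t}\big|_{T^*}>0$, which is the second assertion. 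The sign of the costate then follows mechanically: from $\nu=-\big(\dfrac{\mathrm{d}N^*}{\mathrm{d}t}\big|_{T^*}\big)^{-1}$ and the strict positivity just obtained we get $\nu<0$, and since $\lambda_S(T^*)=\nu$ this is exactly $\lambda_S(T^*)<0$.

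I expect the only genuinely delicate point to be the transversal-crossing step: the path constraint alone only gives a tangential (nonstrict) bound, and the essential trick is that the terminal Hamiltonian identity excludes the degenerate case $\dfrac{\mathrm{d}N^*}{\mathrm{d}t}\big|_{T^*}=0$ (which would force the contradiction $1=0$). A secondary technical care is that, because $D^*$ may be discontinuous, the derivative at $T^*$ must be read as the appropriate one-sided derivative of the trajectory, so that it agrees with both the difference-quotient argument and the value furnished by the maximized Hamiltonian. The initial-growth assumptions \eqref{eq:Dmin,barD,initial values assumption:2} do not enter this lemma directly; the work of converting the nonstrict inequality into a strict one is done entirely by the Hamiltonian identity.
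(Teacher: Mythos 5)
Your proof is correct, and its core is the same identity the paper uses: combining the transversality conditions \eqref{eq:transv1}--\eqref{eq:transv2} with the vanishing of the terminal Hamiltonian \eqref{eq:transv3} to collapse \eqref{eq:Hdef} into $\nu\,\mathrm{d}N^*/\mathrm{d}t\big|_{t=T^*}=-1$, which is exactly \eqref{eq:lambda_S(T)<0}. The routes differ in what surrounds that identity. The paper's proof spends most of its length on material not strictly needed for the statement: it uses the initial-growth assumption \eqref{eq:Dmin,barD,initial values assumption:2} to show $S'(t),R'(t)>0$ under zero dose, concludes that the switching function \eqref{eq:switching} cannot vanish on an interval (no singular arcs), and thereby obtains the bang-bang form \eqref{eq:Dopt:lambda}; it then asserts the final strict inequality without spelling out why $\mathrm{d}N^*/\mathrm{d}t\big|_{t=T^*}>0$. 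You omit that machinery (correctly noting it is not needed here) and instead supply precisely the step the paper leaves implicit: the path constraint \eqref{eq:path constraint} together with \eqref{eq:terminal} forces the one-sided bound $\mathrm{d}N^*/\mathrm{d}t\big|_{t=T^*}\geq 0$, and the identity $\nu\,N^{*\prime}(T^*)=-1$ excludes the tangential case $N^{*\prime}(T^*)=0$, upgrading it to strict positivity and hence $\lambda_S(T^*)=\nu<0$. So, as a proof of the lemma as stated, yours is leaner and more rigorous on the sign determination. One caveat if your proof were substituted into the paper: the bang-bang characterization \eqref{eq:Dopt:lambda}, which the authors establish as a by-product inside this proof, is invoked later (the proof of Theorem \ref{thm:0:1:full} takes $D^*$ piecewise constant with only two values), so that auxiliary fact would then need to be proved elsewhere; it follows from the no-singular-arc argument, not from the lemma's statement itself.
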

\begin{proof}
    Under the condition of $D(t)\equiv D_\mathrm{min}=0$, it is straightforward to verify that $S'(t)>0$ and $R'(t)>0$ for all $t \in [0, T]$ except possibly at isolated points if $S'(0)>0$ and $R'(0)>0$ using a standard argument for ordinary differential equations by contradiction as follows. First, for the case that $S'(t)=R'(t)=0$ it will reach a contradiction by the uniqueness theorem of ordinary differential equations, as the constant solution is also a solution. For the case that $S'(t)=0$ while $R'(t)>0$ (without loss of generality). Assume that $t'=\inf\{t>0|S'(t)\leq0\}$ and $t''=\inf\{t>0|R'(t)\leq0\}$, then because $S(t)$ and $R(t)$ is continuously differentiable and $S'(0)>0,\ R'(0)>0$, we have $t'>0$, $t''>0$. So without loss of generality, we can assume that $t'<t''$. In this case, $S'(t')=0$ and $S'(t)>0, R'(t)>0$ for $t<t'$. And thus, $R'(t')>0$ in some semi-neighbourhood of $t'$, $t\in(t',t'+\delta')$. So, it is impossible that $S'(t)=0$ on any neighbourhood of $t'$ because the right-hand side of \eqref{eq:dSdt} cannot keep zero when $R'(t)>0$ under the assumption that $S(t)\neq 0$.
    
    According to \eqref{eq:Dopt}, as $S'(t)>0$ without drugs ($D(t)\equiv0$), it follows that 
    $$\left(1-\dfrac{S(t)+\alpha R(t)}{K}\right)>\dfrac{d_S}{r_S} S(t)>0.$$
    So it is impossible that the case $\Phi(t)\equiv0$ on some interval happens. Thus, the optimal control (if it exists) can be written as \eqref{eq:Dopt:lambda} as follows.
    \begin{equation}\label{eq:Dopt:lambda}
    D^*(t) = \begin{cases} 
    D_{\max}, & \text{if } \lambda_S(t) < 0 \\
    D_{\min}, & \text{if } \lambda_S(t) > 0.
    \end{cases}
    \end{equation}
    
    Therefore, by \eqref{eq:transv1}-\eqref{eq:transv3}, the boundary value of the costate $\lambda_S(t)$ must satisfy
    \begin{equation}
        \lambda_S(T^*)=-\dfrac{1}{\left.\dfrac{\mathrm{dN^*}}{{\mathrm{d}t}}\right|_{t=T^*}}<0. \label{eq:lambda_S(T)<0}
    \end{equation}
    And this completes the proof.
\end{proof}
The Lemma \ref{thm:lemma:lambda_S(T^*)<0, N(T^*)>0} is consistent with the fact that at the time of progression, the total tumor burden is growing and progression always occurs during drug administration. Otherwise, intuitively, drug administration can always slow down the growth of tumor within a small time interval as sensitive cells are being eliminated.

\begin{lemma}\label{thm:lemma:terminal R(t) S(t):1}
    Assume the optimal control problem has a solution that is piecewise constant with finitely many discontinuities. If $\alpha=\beta=1$, there exists $T_0<T^*$ such that $r_R R^*(t) > \bar{D}  r_S S^*(t)$ holds for every $t\in [T_0,T^*]$.
\end{lemma}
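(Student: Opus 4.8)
The plan is to combine a barrier (invariant‑region) argument for the total burden $N^*$ with the terminal sign data from Lemma~\ref{thm:lemma:lambda_S(T^*)<0, N(T^*)>0}. First I would localize the control near the terminal time: since $\lambda_S(T^*)<0$ by Lemma~\ref{thm:lemma:lambda_S(T^*)<0, N(T^*)>0}, and the costate $\lambda_S$ is continuous while (by assumption) the optimal control has only finitely many discontinuities, the characterization \eqref{eq:Dopt:lambda} forces $D^*(t)=D_{\max}$ on a left‑neighborhood $[T_1,T^*]$. On this interval, substituting $\alpha=\beta=1$ and $D^*=D_{\max}$ into \eqref{eq:dSdt}--\eqref{eq:dRdt} and adding yields
\begin{equation*}
\frac{\mathrm{d}N^*}{\mathrm{d}t}=\left(1-\frac{N^*}{K}\right)\bigl(r_R R^*-\bar{D}r_S S^*\bigr)-d_S S^*-d_R R^*,
\end{equation*}
which isolates precisely the quantity $r_R R^*-\bar{D}r_S S^*$ whose sign we must control.

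The second ingredient is to establish $N^*(t)<K$ for every $t\in[0,T^*]$, i.e. $1-N^*(t)/K>0$. The key observation is that when $\alpha=\beta=1$ the two competition factors $1-(S+\alpha R)/K$ and $1-(\beta S+R)/K$ both collapse to $1-N/K$, so at any instant where $N=K$ they vanish and the total‑burden dynamics reduce to $\mathrm{d}N/\mathrm{d}t=-d_S S-d_R R<0$, irrespective of the drug level. Hence the line $N=K$ cannot be crossed from below: if $t_1$ were the first time with $N^*(t_1)=K$, then $\mathrm{d}N^*/\mathrm{d}t(t_1)\ge0$ would contradict the strict negativity just computed. Since $N^*(0)=n_0K<K$, this standard contradiction argument gives $N^*(t)<K$ throughout, and in particular $1-N^*(T^*)/K>0$.

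Finally I would combine the two facts. Evaluating the displayed identity at $t=T^*$ and using $\mathrm{d}N^*/\mathrm{d}t|_{T^*}>0$ (Lemma~\ref{thm:lemma:lambda_S(T^*)<0, N(T^*)>0}) gives
\begin{equation*}
\left(1-\frac{N^*(T^*)}{K}\right)\bigl(r_R R^*(T^*)-\bar{D}r_S S^*(T^*)\bigr)=\left.\frac{\mathrm{d}N^*}{\mathrm{d}t}\right|_{T^*}+d_S S^*(T^*)+d_R R^*(T^*)>0.
\end{equation*}
Because the prefactor $1-N^*(T^*)/K$ is strictly positive, the bracket must be positive, i.e. $r_R R^*(T^*)>\bar{D}r_S S^*(T^*)$. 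Since $S^*$ and $R^*$ are continuous, this strict inequality persists on some interval $[T_2,T^*]$; taking $T_0=\max(T_1,T_2)<T^*$ gives the claim.

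The main obstacle is the positivity of the prefactor $1-N^*(T^*)/K$: without it the sign of the product above would not determine the sign of the bracket, and indeed if the burden could exceed $K$ the desired inequality would reverse. This is exactly where neutrality $\alpha=\beta=1$ is essential, since it is the exact cancellation of the competition terms at $N=K$ that makes $N=K$ an impenetrable upper barrier; for $\alpha\neq1$ or $\beta\neq1$ the residual contributions $-(\alpha-1)R/K$ and $-(\beta-1)S/K$ spoil this clean sign, which is consistent with the second hypothesis of Theorem~\ref{thm:0:1:full} replacing neutrality by the burden restriction $n_0<\frac{1}{(1+\varepsilon)(1+\beta)}$.
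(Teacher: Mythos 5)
Your proposal is correct and follows essentially the same route as the paper: both localize $D^*=D_{\max}$ near $T^*$ via $\lambda_S(T^*)<0$, write $\mathrm{d}N^*/\mathrm{d}t=\left(1-\frac{N^*}{K}\right)\left(r_R R^*-\bar{D}r_S S^*\right)-d_S S^*-d_R R^*$ under $\alpha=\beta=1$, and invoke $\left.\mathrm{d}N^*/\mathrm{d}t\right|_{T^*}>0$ from Lemma~\ref{thm:lemma:lambda_S(T^*)<0, N(T^*)>0} to force the bracket positive. The only difference is that you explicitly prove the prefactor $1-N^*/K$ stays positive via the barrier argument at $N=K$, a point the paper's proof uses implicitly without justification, so your write-up is if anything more complete.
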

\begin{proof}
    For simplicity, the asterisk in the superscript are omitted in this proof. According to Lemma \ref{thm:lemma:lambda_S(T^*)<0, N(T^*)>0}, in some neighbourhood of $T^*$, the following holds.
    \begin{eqnarray}
        \dfrac{\mathrm{d}N}{\mathrm{d}t}&=& -r_{S} \left(1 - \dfrac{S + R}{K}\right) \bar{D} S - d_{S}S+ r_{R} \left(1 - \dfrac{ S + R}{K}\right) R - d_{R}R\nonumber\\
        &=&\left(1-\dfrac{S+R}{K}\right)(-r_S\bar{D}S+r_RR)-d_SS-d_RR>0.
    \end{eqnarray}
    So $\left(1-\dfrac{S+R}{K}\right)(-r_S\bar{D}S+r_RR)>d_SS+d_RR>0$ and therefore
    \begin{equation}
        r_RR(t)>\bar{D}r_SS(t), \ \mbox{ for } t\in [T_0,T^*].
    \end{equation}
\end{proof}
\begin{lemma}\label{thm:lemma:terminal R(t) S(t):2}
    Assume that $1\leq\beta\leq\alpha$ and $S(0)+R(0)<\dfrac{K}{(1+\varepsilon)(1+\beta)}$. If the optimal control problems \eqref{eq:objective}-\eqref{eq:param4} has a piecewise constant optimal control according to PMP, there exists $T_0'<T^*$ such that $r_R\beta R^*(t) > \bar{D}  r_S S^*(t)$ for every $t\in [T_0', T^*]$.
\end{lemma}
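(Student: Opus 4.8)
The plan is to follow the template of Lemma~\ref{thm:lemma:terminal R(t) S(t):1}, reducing the interval statement to a single strict inequality at the terminal time and then invoking continuity. Since $S^*(t)$ and $R^*(t)$ are continuous, it suffices to prove the strict inequality $r_R\beta R^*(T^*) > \bar{D} r_S S^*(T^*)$ at $t=T^*$; the existence of some $T_0'<T^*$ with $r_R\beta R^* > \bar{D} r_S S^*$ on all of $[T_0',T^*]$ then follows at once. To control the terminal behaviour I would first invoke Lemma~\ref{thm:lemma:lambda_S(T^*)<0, N(T^*)>0}: under its hypotheses $\lambda_S(T^*)<0$, so by the sign characterization of the optimal control (cf. \eqref{eq:Dopt:lambda}) the control equals $D_{\max}$ on a terminal interval, and $\left.\tfrac{\mathrm{d}N^*}{\mathrm{d}t}\right|_{t=T^*}>0$.

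Writing this derivative out under $D=D_{\max}$, with the two logistic factors $A=1-\tfrac{S+\alpha R}{K}$ and $B=1-\tfrac{\beta S+R}{K}$, gives
\begin{equation*}
 r_R R^* B - \bar{D} r_S S^* A > d_S S^* + d_R R^* > 0 .
\end{equation*}
Next I would translate the hypotheses into bounds on $A$ and $B$. Combining the initial-burden assumption $S(0)+R(0)<\tfrac{K}{(1+\varepsilon)(1+\beta)}$ with the terminal value $N(T^*)=(1+\varepsilon)n_0K$ and the path constraint \eqref{eq:path constraint} yields $N(t)\le(1+\varepsilon)n_0K<\tfrac{K}{1+\beta}$ on $[0,T^*]$, hence $(1+\beta)N<K$. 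This immediately gives $B>\tfrac{1}{1+\beta}>0$ and, using $1\le\beta\le\alpha$, the comparison $\beta B\ge A$ (one checks $\beta B-A=(\beta-1)\bigl(1-\tfrac{(\beta+1)S}{K}\bigr)+\tfrac{(\alpha-\beta)R}{K}\ge 0$); moreover the region constraint \eqref{eq:param4} keeps the trajectory inside $\{S+\alpha R<K\}$, so $A>0$.

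The final and hardest step is to upgrade $r_R R^* B > \bar{D} r_S S^* A$ to the $\beta$-weighted statement $r_R\beta R^* > \bar{D} r_S S^*$, and I expect this to be the main obstacle. It is genuinely more delicate than the neutral case of Lemma~\ref{thm:lemma:terminal R(t) S(t):1}: there $\alpha=\beta=1$ forces $A=B$, the common factor cancels, and the conclusion drops out, whereas for $\alpha>\beta$ the two logistic factors differ and the mere positivity of $\tfrac{\mathrm{d}N^*}{\mathrm{d}t}$ does not suffice pointwise — the bound $\beta B\ge A$ points the \emph{wrong} way for a clean chain, and one can in fact exhibit states with $\tfrac{\mathrm{d}N}{\mathrm{d}t}>0$ that still violate $r_R\beta R>\bar{D}r_S S$. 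Closing this gap therefore requires leaning on the quantitative sharpness of $N(T^*)<\tfrac{K}{1+\beta}$ together with the ordering $1\le\beta\le\alpha$, and, if the instantaneous terminal data prove insufficient, on the global structure of the terminal maximal-dose arc: while $A,B>0$ one computes $\tfrac{\mathrm{d}}{\mathrm{d}t}\log\tfrac{R^*}{S^*}=r_RB+\bar{D}r_SA+d_S-d_R>0$, so $R^*/S^*$ is strictly increasing along that arc, which — combined with the burden ceiling — should rule out sensitive-dominant terminal states and force $r_R\beta R^*>\bar{D}r_S S^*$ near $T^*$, after which continuity supplies the required $T_0'$.
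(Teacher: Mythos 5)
Your proposal follows the same skeleton as the paper's proof --- use Lemma~\ref{thm:lemma:lambda_S(T^*)<0, N(T^*)>0} to get a terminal arc with $D^*=D_{\max}$ and $N'(T^*)>0$, deduce a strict inequality at $T^*$, then spread it to an interval $[T_0',T^*]$ by continuity --- and your preliminary estimates ($N(t)\le(1+\varepsilon)n_0K<K/(1+\beta)$ on $[0,T^*]$, hence $B>1/(1+\beta)$ and $\beta B\ge A$) are correct. But it stops short of a proof exactly where you say it does: the passage from $r_RR^*B-\bar{D}r_SS^*A>0$ to $r_R\beta R^*>\bar{D}r_SS^*$ is never carried out. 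The fallback you sketch --- $R^*/S^*$ is increasing along the terminal max-dose arc, which ``should'' exclude sensitive-dominant terminal states --- is not an argument: monotonicity only says the ratio was smaller earlier, and provides no lower bound on $R^*(T^*)/S^*(T^*)$, let alone the quantitative bound $\bar{D}r_S/(r_R\beta)$. (A smaller unjustified step: \eqref{eq:param4} constrains only the initial point, so keeping the trajectory in $\{S+\alpha R<K\}$, i.e. $A>0$, also needs an invariance argument.) So the proposal has a genuine gap at the decisive step.

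That said, your diagnosis is exactly right, and it exposes a real defect in the paper's own proof. The paper closes the step by asserting that $N'>0$ ``and thus''
\begin{equation*}
\left(1-\frac{S+\alpha R}{K}\right)\left(-\bar{D}r_SS+r_R\beta R\right)+r_RR\left(\frac{(\beta-\alpha)S}{K}+(1-\beta)\,\frac{K-(\beta+1)R}{K}\right)-d_SS-d_RR>0,
\end{equation*}
after which the nonpositive bracket and the death terms are discarded to give $-\bar{D}r_SS+r_R\beta R>0$. But the exact algebra is $-\bar{D}r_SSA+r_RRB=A\left(-\bar{D}r_SS+r_R\beta R\right)+r_RR\left(B-\beta A\right)$ with $B-\beta A=(1-\beta)+\frac{(\alpha\beta-1)R}{K}$; the paper's bracket equals $B-\beta A$ plus $\frac{(\beta-\alpha)(S+\beta R)}{K}\le0$, so the displayed inequality is \emph{stronger} than $N'>0$ and does not follow from it, and the true correction $B-\beta A$ can be positive when $\alpha>\beta$ (e.g. $\beta=1$, $\alpha=2$, $R=0.1K$). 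Your counterexample observation confirms that this is not repairable pointwise: with $\beta=1$, $\alpha=2$, $S=0.3K$, $R=0.1K$, $\bar{D}r_SS=1.1\,r_RR$ and negligible death rates, one has $N'>0$ and $N<K/(1+\beta)$ while $r_R\beta R<\bar{D}r_SS$. Hence the paper's argument is sound only in the sub-case $\alpha=\beta$ (its bracket then coincides with $B-\beta A=(1-\beta)\left(1-\frac{(1+\beta)R}{K}\right)\le0$), and the case $\alpha>\beta$ genuinely requires a global argument along the lines you gesture at --- which neither your proposal nor the paper supplies.
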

\begin{proof}
    Similar to the proof of Lemma \ref{thm:lemma:terminal R(t) S(t):1}, the asterisks are also omitted in this proof. And, in a neighbourhood of $T^*$, the following holds.
    \begin{eqnarray}
        \dfrac{\mathrm{d}N}{\mathrm{d}t}&=& r_{S} \left(1 - \dfrac{S + \alpha R}{K}\right) \left(-\bar{D}\right) S - d_{S}S+ r_{R} \left(1 - \dfrac{ \beta S + R}{K}\right) R - d_{R}R>0.
    \end{eqnarray}
    And thus,
    \begin{eqnarray}
        \left(1 - \dfrac{S + \alpha R}{K}\right)\left(-\bar{D}r_SS+r_R\beta R\right)+r_RR\left(\dfrac{(\beta-\alpha)S}{K}+(1-\beta)\dfrac{K-(\beta+1)R}{K}\right)-d_Ss-d_RR>0,\nonumber\\
    \end{eqnarray}
    Moreover, since $n_0K=S(0)+R(0)<\dfrac{K}{(1+\varepsilon)(1+\beta)}$, the threshold of progression is thus $(1+\varepsilon)n_0K<\dfrac{K}{1+\beta}$. So the final state of $R(t)$ should satisfies $S(T^*)<N(T^*)<\dfrac{K}{1+\beta}$. Therefore, $\dfrac{(\beta-\alpha)S}{K}+(1-\beta)\dfrac{K-(\beta+1)R}{K}<0$. Also because that $1<\beta<\alpha$, so we have that $-\bar{D}r_SS+r_R\beta R>0$ which, according to the simplification of the notations, implies that $r_R\beta R^*(T^*)>\bar{D}r_SS^*(T^*)$.
    
    And the result of the lemma is followed by a common argument of calculus.
\end{proof}

Now the lemmas are sufficient to prove the main Theorem \ref{thm:0:1:full}.
\begin{proof}[Proof of Theorem \ref{thm:0:1:full}]
    By contradiction, assume the optimal control $D^*(t)$ has only finitely many discontinuities. Then, by Pontryagin's Maximum Principle (PMP), $D^*(t)$ is a piecewise constant function whose range has only two values. Let $T_1$ denote the last discontinuity point of the optimal control. So we have that $D^*(t)=D_\mathrm{max}$ for $t\in[T_1,T^*]$ and moreover, from Lemma \ref{thm:lemma:terminal R(t) S(t):1} and Lemma \ref{thm:lemma:terminal R(t) S(t):2}, we can choose $T_2\in[T_1,T^*)$ such that $N'(t)>0$ and $r_R\beta R(t)>\bar{D}r_SS(t)$ hold for all $t\in [T_2,T^*]$. 

    Now we are going to prove that there exists $\Delta t>0 $ such that the state solution has a better termination time than $T^*$ with $D_{\Delta t}(t)$ defined as the following \eqref{eq:D1:for theorem}.
    \begin{equation}
        D_{\Delta t}(t)=\begin{cases}
            D_{\mathrm{min}},& t\in[T^*-2\Delta t,T^*-\Delta t],\\
            D^*(t), & \mbox{ otherwise}.
        \end{cases}\label{eq:D1:for theorem}
    \end{equation}
    Let $S_{\Delta t}(t),\ R_{\Delta t}(t),\ N_{\Delta t}(t)=S_{\Delta t}(t)+R_{\Delta t}(t)$ denote the solution with this $D_{\Delta t}(t)$. Then $N^*(t)$ and $N_{\Delta t}(t)$ obey the same differential equations for all $t$ except $t\in[T^*-2\Delta t,T^*-\Delta t]$.
    
    Moreover, by denoting $t_i=T^*-(2-i)\Delta t$, $i=0,1,2$ and $S^*_0=S^*(t_0), \ R^*_0=R^*(t_0)$, then we have equations \eqref{eq:ode:Euler:1}-\eqref{eq:ode:Euler:8}.
    \begin{align}
        S^*(t_1) &= S^*_0 + \left[ r_S \left(1 - \dfrac{S^*_0 + \alpha R^*_0}{K}\right) (-\bar{D}) S^*_0 - d_S S^*_0 \right] \Delta t + o(\Delta t), \label{eq:ode:Euler:1} \\
        R^*(t_1) &= R^*_0 + \left[ r_R \left(1 - \dfrac{\beta S^*_0 + R^*_0}{K}\right) R^*_0 - d_R R^*_0 \right] \Delta t + o(\Delta t), \label{eq:ode:Euler:2} \\
        S^*(t_2) &= S^*(t_1) + \left[ r_S \left(1 - \dfrac{S^*(t_1) + \alpha R^*(t_1)}{K}\right) (-\bar{D}) S^*(t_1) - d_S S^*(t_1) \right] \Delta t + o(\Delta t), \label{eq:ode:Euler:3} \\
        R^*(t_2) &= R^*(t_1) + \left[ r_R \left(1 - \dfrac{\beta S^*(t_1) + R^*(t_1)}{K}\right) R^*(t_1) - d_R R^*(t_1) \right] \Delta t + o(\Delta t), \label{eq:ode:Euler:4} \\[0.3cm]
        S_{\Delta t}(t_1) &= S^*_0 + \left[ r_S \left(1 - \dfrac{S^*_0 + \alpha R^*_0}{K}\right) S^*_0 - d_S S^*_0 \right] \Delta t + o(\Delta t), \label{eq:ode:Euler:5} \\
        R_{\Delta t}(t_1) &= R^*_0 + \left[ r_R \left(1 - \dfrac{\beta S^*_0 + R^*_0}{K}\right) R^*_0 - d_R R^*_0 \right] \Delta t + o(\Delta t), \label{eq:ode:Euler:6} \\
        S_{\Delta t}(t_2) &= S_{\Delta t}(t_1) + \left[ r_S \left(1 - \dfrac{S_{\Delta t}(t_1) + \alpha R_{\Delta t}(t_1)}{K}\right) (-\bar{D}) S_{\Delta t}(t_1) - d_S S_{\Delta t}(t_1) \right] \Delta t + o(\Delta t), \label{eq:ode:Euler:7} \\
        R_{\Delta t}(t_2) &= R_{\Delta t}(t_1) + \left[ r_R \left(1 - \dfrac{\beta S_{\Delta t}(t_1) + R_{\Delta t}(t_1)}{K}\right) R_{\Delta t}(t_1) - d_R R_{\Delta t}(t_1) \right] \Delta t + o(\Delta t). \label{eq:ode:Euler:8}
    \end{align}
    And therefore, we have that
    \begin{equation}
        \displaystyle
        \begin{array}{l}
            N^*(T^*)-N_{\Delta t}(T^*)\\
            =S^*(t_2)+R^*(t_2)-(S_{\Delta t}(t_2)+R_{\Delta t}(t_2)\\[0.3cm]
            =(1 + \bar{D}) r_S S^*_0 \left(1 - \dfrac{S^*_0 + \alpha R^*_0}{K}\right) \left[d_S  + \bar{D}  r_S\left(1 -  \dfrac{S^*_0 +    \alpha R^*_0}{K}\right) +\dfrac{-\bar{D}r_S S^*_0+r_R\beta R^*_0}{K}\right] \left(\Delta t\right)^2+o\left(\left(\Delta t\right)^2\right).
        \end{array}
    \end{equation}
    Hence, for the case 1, $\alpha=\beta=1$, it has already been obtained that $1-\dfrac{S^*(t)+R^*(t)}{K}>0$ and $\bar{D}  r_S S^*_0< r_R R^*_0$ for $\Delta t <\dfrac{T^*-T_0}{2}$ according to Lemma \ref{thm:lemma:terminal R(t) S(t):1}. And since the $o(\Delta t)$ only depends on the properties of the (assumed) optimal solution $N^*(t)$,  there exists a $\delta'<\dfrac{T^*-T_0}{2}$ such that for $\Delta t\in (0,\delta')$, and therefore, $N^*(T^*)-N_{\Delta t}(T^*)>0$, implying that $N(t)$ reaches $(1+\varepsilon)n_0K$ at some $t>T^*$, contradicting the assumption that $T^*$ is the optimal objective value.

    For the case 2, $1\leq\beta\leq\alpha$ and $n_0<\dfrac{1}{(1+\varepsilon)(1+\beta)}$, it follows that $N^*(T^*)-N_{\Delta t}(T^*)>0$ by Lemma \ref{thm:lemma:terminal R(t) S(t):2}, which also contradicts with the assumption that $N^*(t)$ is the optimal state solution.

    So the proof has been completed.
\end{proof}

\section{DISCUSSION}
Adaptive therapy crucially leverages competitive dynamics to maintain therapeutic-sensitive cell reservoirs that suppress resistant clones, strategically delaying drug resistance evolution while minimizing tumor progression \cite{Gatenby2009}. By preserving sensitive cells through threshold-modulated treatment holidays, adaptive therapy sustains ecological suppression of resistant populations—an advantage over traditional maximum tolerated dose regimens, which inadvertently accelerate resistance \cite{Enriquez-Navas2016}. The clinical success of adaptive therapy hinges on its paradigm-shifting principle: harnessing intra-tumor competition to transform therapy-sensitive cells into permanent controllers of resistant cell expansion \cite{West2020a, Zhang2022}. However, the rule-of-thumb thresholds used in clinical practice may require further research and optimization as they might not fully account for the complex, dynamic interactions between sensitive and resistant cell populations, and may vary across different tumor types and patient profiles \cite{McGehee2024.02.19.580916, Wang2024c, Tan2024,Strobl2023.03.22.533721,Strobl2023b,Liu2024d}.

This study employs a Lotka-Volterra model to investigate how treatment-holiday thresholds influence adaptive therapy outcomes in tumors composed of drug-sensitive and resistant cells. By simulating tumor dynamics under varying thresholds, initial tumor burdens, and competition coefficients, the analysis identifies three distinct therapeutic scenarios: \textbf{uniform-improve} (where AT consistently outperforms MTD regardless of threshold selection), \textbf{conditional-improve} (AT efficacy depends on threshold calibration), and \textbf{uniform-decline} (MTD superiority). When considering clinical uncertainties in measurement data, generally, the greater the initial tumor burden, the more likely adaptive therapy will outperform other treatment strategies. However, in cases where the tumor burden is excessively high, the advantages of adaptive therapy become irrelevant, and alternative approaches, beyond pharmacological treatment, may be required.

The competitive dynamics between sensitive and resistant cells is also a key factor in tumor evolution under therapeutic stress. In strong competition environments, tumors tend to evolve toward a state dominated by either sensitive or resistant cells, depending on the initial composition. Continuous MTD therapy drives the tumor toward a stable state where the ratio of sensitive to resistant cells remains constant. Thus, the critical factor in treatment outcomes is not the treatment-holiday threshold but rather the impact of the initial tumor state on the final result. In weak competition environments, sensitive and resistant cells can coexist, and adaptive therapy may lead to resistant cell dominance through Darwinian selection. Treatment effectiveness is influenced by the competition coefficients, $\alpha$ and $\beta$, with lower $\alpha$ and higher $\beta$ favoring adaptive therapy. Adaptive therapy prolongs TTP most effectively when sensitive cells maintain a competitive advantage and when initial tumor burden is high, enabling ecological suppression of resistant cells.

This study advances our understanding of adaptive therapy by systematically quantifying the role of treatment-holiday thresholds in modulating tumor evolution under treatment. By integrating ecological competition principles into therapeutic design, the current work provides a mechanistic basis for optimizing intermittent dosing strategies. The findings highlight that adaptive therapy is not universally superior to MTD; its success depends critically on tumor initial state and the treatment-holiday threshold selection. These insights bridge theoretical ecology and clinical oncology, emphasizing the need for dynamic, patient-specific treatment protocols rather than static dosing regimens.  

This work establishes a quantitative relationship between treatment-holiday thresholds and progression-free survival. By categorizing outcomes into three distinct scenarios, the model offers clinicians a potential framework to stratify patients based on measurable parameters such as initial tumor burden and resistant cell fraction. Furthermore, the demonstration that strong competition can stabilize tumor burden indefinitely under specific thresholds has profound implications for managing advanced cancers. These results extend prior work \cite{Gatenby2009, Strobl2021a,McGehee2024.02.19.580916,Wang2024c} by incorporating threshold-driven mechanism in adaptive therapy.

Despite these advances, the study has several limitations. First, the binary classification of tumor cells overlooks the phenotypic plasticity and continuous heterogeneity that are critical drivers of therapeutic resistance \cite{Gupta2011a, Marusyk2020, Brabletz2018}. Second, model parameters such as the competition coefficients ($\alpha$, $\beta$) are based on population-level assumptions rather than molecular data, necessitating further investigation into the microscopic foundations of these assumptions \cite{Uthamacumaran2022, Wang2020}. Third, the Lotka-Volterra model does not account for the spatial structure of tumors \cite{Noble2022} or the microenvironmental constraints that alter tumor dynamics \cite{West2018, Lorenzi2016}. Future studies should integrate single-cell and spatial omics data to mechanistically parameterize competition coefficients and explore hybrid models that combine continuum and spatial interactions. Additionally, clinical validation through adaptive therapy trials, incorporating real-time biomarker monitoring, will be crucial for translating these insights into practice.







\subsection*{DATA AND CODE AVAILABILITY}

\begin{itemize}
    \item This study did not use any real-world data; all data were generated through computational simulations.
    \item All original code has been deposited on GitHub (\verb|https://github.com/zhuge-c/AT-0|).
    \item Any additional information required to reanalyze the data reported in this paper is available from the lead contact upon request.
\end{itemize}

\subsection*{ACKNOWLEDGMENTS}
This work was funded by The National Natural Science Foundation of China (NSFC) via grant 11801020.





\newpage

\bibliography{ref-0}

\begin{thebibliography}{10}

\bibitem{Aguade-Gorgorio2024a}
Guim Aguad{\'{e}}-Gorgori{\'{o}}, Alexander~R.A. Anderson, and Ricard Sol{\'{e}}.
\newblock {Modeling tumors as complex ecosystems}.
\newblock {\em iScience}, 27(9):110699, 9 2024.

\bibitem{Aguade-Gorgorio2024b}
Guim Aguad{\'{e}}-Gorgori{\'{o}} and Sonia K{\'{e}}fi.
\newblock {Alternative cliques of coexisting species in complex ecosystems}.
\newblock {\em Journal of Physics: Complexity}, 5(2), 2024.

\bibitem{AthenaAktipis2015a}
C.~Athena Aktipis, Amy~M. Boddy, Gunther Jansen, Urszula Hibner, Michael~E. Hochberg, Carlo~C. Maley, Gerald~S. Wilkinson, C.~Athena~Aktipis, Amy~M. Boddy, Gunther Jansen, Urszula Hibner, Michael~E. Hochberg, Carlo~C. Maley, and Gerald~S. Wilkinson.
\newblock {Cancer across the tree of life: cooperation and cheating in multicellularity}.
\newblock {\em Philosophical Transactions of the Royal Society B: Biological Sciences}, 370(1673):20140219, 7 2015.

\bibitem{Attolini2009}
Camille~Stephan‐Otto Attolini and Franziska Michor.
\newblock {Evolutionary Theory of Cancer}.
\newblock {\em Annals of the New York Academy of Sciences}, 1168(1):23--51, 6 2009.

\bibitem{Beckman2020}
Robert~A. Beckman, Irina Kareva, and Frederick~R. Adler.
\newblock {How Should Cancer Models Be Constructed?}
\newblock {\em Cancer Control}, 27(1):1--12, 10 2020.

\bibitem{Bilder2021}
David Bilder, Katy Ong, Tsai~Ching Hsi, Kavya Adiga, and Jung Kim.
\newblock {Tumour–host interactions through the lens of Drosophila}.
\newblock {\em Nature Reviews Cancer}, 21(11):687--700, 2021.

\bibitem{Brabletz2018}
Thomas Brabletz, Raghu Kalluri, M.~Angela Nieto, and Robert~A. Weinberg.
\newblock {EMT in cancer}.
\newblock {\em Nature Reviews Cancer}, 18(2):128--134, 2 2018.

\bibitem{Brady2019}
Renee Brady and Heiko Enderling.
\newblock {Mathematical Models of Cancer: When to Predict Novel Therapies, and When Not to}.
\newblock {\em Bulletin of Mathematical Biology}, 81(10):3722--3731, 10 2019.

\bibitem{Bray2024}
Freddie Bray, Mathieu Laversanne, Hyuna Sung, Jacques Ferlay, Rebecca~L. Siegel, Isabelle Soerjomataram, and Ahmedin Jemal.
\newblock {Global cancer statistics 2022: GLOBOCAN estimates of incidence and mortality worldwide for 36 cancers in 185 countries}.
\newblock {\em CA: A Cancer Journal for Clinicians}, 74(3):229--263, 5 2024.

\bibitem{Brown2023}
Joel~S. Brown, Sarah~R. Amend, Robert~H. Austin, Robert~A. Gatenby, Emma~U. Hammarlund, and Kenneth~J. Pienta.
\newblock {Updating the Definition of Cancer}.
\newblock {\em Molecular Cancer Research}, 21(11):1142--1147, 11 2023.

\bibitem{Bukkuri2023}
Anuraag Bukkuri, Kenneth~J Pienta, Ian Hockett, Robert~H Austin, Emma~U Hammarlund, Sarah~R Amend, and Joel~S Brown.
\newblock {Modeling cancer’s ecological and evolutionary dynamics}.
\newblock {\em Medical Oncology}, 40(4):109, 2 2023.

\bibitem{Chapman2014}
Anna Chapman, Laura~Fernandez del Ama, Jennifer Ferguson, Jivko Kamarashev, Claudia Wellbrock, and Adam Hurlstone.
\newblock {Heterogeneous tumor subpopulations cooperate to drive invasion}.
\newblock {\em Cell Reports}, 8(3):688--695, 2014.

\bibitem{Deng2024}
Yulan Deng, Liang Xia, Jian Zhang, Senyi Deng, Mengyao Wang, Shiyou Wei, Kaixiu Li, Hongjin Lai, Yunhao Yang, Yuquan Bai, Yongcheng Liu, Lanzhi Luo, Zhenyu Yang, Yaohui Chen, Ran Kang, Fanyi Gan, Qiang Pu, Jiandong Mei, Lin Ma, Feng Lin, Chenglin Guo, Hu~Liao, Yunke Zhu, Zheng Liu, Chengwu Liu, Yang Hu, Yong Yuan, Zhengyu Zha, Gang Yuan, Gao Zhang, Luonan Chen, Qing Cheng, Shensi Shen, and Lunxu Liu.
\newblock {Multicellular ecotypes shape progression of lung adenocarcinoma from ground-glass opacity toward advanced stages}.
\newblock {\em Cell Reports Medicine}, page 101489, 3 2024.

\bibitem{Emond2023}
Rena Emond, Jason~I. Griffiths, Vince~Kornél Grolmusz, Aritro Nath, Jinfeng Chen, Eric~F. Medina, Rachel~S. Sousa, Timothy Synold, Frederick~R. Adler, and Andrea~H. Bild.
\newblock {Cell facilitation promotes growth and survival under drug pressure in breast cancer}.
\newblock {\em Nature Communications}, 14(1):3851, 6 2023.

\bibitem{Enriquez-Navas2016}
Pedro~M. Enriquez-Navas, Yoonseok Kam, Tuhin Das, Sabrina Hassan, Ariosto Silva, Parastou Foroutan, Epifanio Ruiz, Gary Martinez, Susan Minton, Robert~J. Gillies, and Robert~A. Gatenby.
\newblock {Exploiting evolutionary principles to prolong tumor control in preclinical models of breast cancer}.
\newblock {\em Science Translational Medicine}, 8(327), 2 2016.

\bibitem{Enriquez-Navas2015a}
Pedro~M. Enriquez-Navas, Jonathan~W. Wojtkowiak, and Robert~A. Gatenby.
\newblock {Application of evolutionary principles to cancer therapy}.
\newblock {\em Cancer Research}, 75(22):4675--4680, 2015.

\bibitem{Ermini2023}
L.~Ermini, S.~Taurone, A.~Greco, and M.~Artico.
\newblock {Cancer progression: a single cell perspective}.
\newblock {\em European review for medical and pharmacological sciences}, 27(12):5721--5747, 2023.

\bibitem{Gallagher2023.04.28.538766}
Kit Gallagher, Maximilian~A.R. Strobl, Derek~S. Park, Fabian~C. Spoendlin, Robert~A. Gatenby, Philip~K. Maini, and Alexander~R.A. Anderson.
\newblock {Mathematical Model-Driven Deep Learning Enables Personalized Adaptive Therapy}.
\newblock {\em Cancer Research}, 84(11):1929--1941, 6 2024.

\bibitem{Gallaher2018}
Jill~A. Gallaher, Pedro~M. Enriquez-Navas, Kimberly~A. Luddy, Robert~A. Gatenby, and Alexander~R.A. Anderson.
\newblock {Spatial heterogeneity and evolutionary dynamics modulate time to recurrence in continuous and adaptive cancer therapies}.
\newblock {\em Cancer Research}, 78(8):2127--2139, 4 2018.

\bibitem{Gatenby2018}
Robert Gatenby and Joel Brown.
\newblock {The evolution and ecology of resistance in cancer therapy}.
\newblock {\em Cold Spring Harbor Perspectives in Medicine}, 8(3), 2018.

\bibitem{Gatenby2009a}
Robert~A. Gatenby.
\newblock {A change of strategy in the war on cancer}.
\newblock {\em Nature}, 459(7246):508--509, 5 2009.

\bibitem{Gatenby2009}
Robert~A. Gatenby, Ariosto~S. Silva, Robert~J. Gillies, and B.~Roy Frieden.
\newblock {Adaptive therapy}.
\newblock {\em Cancer Research}, 69(11):4894--4903, 6 2009.

\bibitem{Gatenby919}
Robert~A Gatenby and Thomas~L Vincent.
\newblock {Application of quantitative models from population biology and evolutionary game theory to tumor therapeutic strategies}.
\newblock {\em Molecular Cancer Therapeutics}, 2(9):919--927, 2003.

\bibitem{Grassberger2019a}
Clemens Grassberger, David McClatchy, Changran Geng, Sophia~C. Kamran, Florian Fintelmann, Yosef~E. Maruvka, Zofia Piotrowska, Henning Willers, Lecia~V. Sequist, Aaron~N. Hata, and Harald Paganetti.
\newblock {Patient-specific tumor growth trajectories determine persistent and resistant cancer cell populations during treatment with targeted therapies}.
\newblock {\em Cancer Research}, 79(14):3776--3788, 7 2019.

\bibitem{Greaves2012}
Mel Greaves and Carlo~C. Maley.
\newblock {Clonal evolution in cancer}.
\newblock {\em Nature}, 481(7381):306--313, 1 2012.

\bibitem{Gupta2011a}
Piyush~B. Gupta, Christine~M. Fillmore, Guozhi Jiang, Sagi~D. Shapira, Kai Tao, Charlotte Kuperwasser, and Eric~S. Lander.
\newblock {Stochastic State Transitions Give Rise to Phenotypic Equilibrium in Populations of Cancer Cells}.
\newblock {\em Cell}, 146(4):633--644, 8 2011.

\bibitem{Hansen2020}
Elsa Hansen and Andrew~F. Read.
\newblock {Modifying Adaptive Therapy to Enhance Competitive Suppression}.
\newblock {\em Cancers}, 12(12):3556, 11 2020.

\bibitem{Hochberg2018}
Michael~E Hochberg.
\newblock {An ecosystem framework for understanding and treating disease}.
\newblock {\em Evolution, Medicine, and Public Health}, pages 270--286, 2018.

\bibitem{Holohan2013}
Caitriona Holohan, Sandra Van~Schaeybroeck, Daniel~B Longley, and Patrick~G Johnston.
\newblock {Cancer drug resistance: an evolving paradigm}.
\newblock {\em Nature Reviews Cancer}, 13(10):714--726, 10 2013.

\bibitem{Liu2024d}
Beilei Liu, Hongyu Zhou, Licheng Tan, Kin To~Hugo Siu, and Xin-Yuan Guan.
\newblock {Exploring treatment options in cancer: tumor treatment strategies}.
\newblock {\em Signal Transduction and Targeted Therapy}, 9(1):175, 7 2024.

\bibitem{Lorenzi2016}
Tommaso Lorenzi, Rebecca~H. Chisholm, and Jean Clairambault.
\newblock {Tracking the evolution of cancer cell populations through the mathematical lens of phenotype-structured equations}.
\newblock {\em Biology Direct}, 11(1):43, 12 2016.

\bibitem{Masud2022}
Masud M~A, Jae-Young Kim, Cheol-Ho Pan, and Eunjung Kim.
\newblock {The impact of the spatial heterogeneity of resistant cells and fibroblasts on treatment response}.
\newblock {\em PLOS Computational Biology}, 18(3):e1009919, 3 2022.

\bibitem{Maley2017}
Carlo~C. Maley, Athena Aktipis, Trevor~A. Graham, Andrea Sottoriva, Amy~M. Boddy, Michalina Janiszewska, Ariosto~S. Silva, Marco Gerlinger, Yinyin Yuan, Kenneth~J. Pienta, Karen~S. Anderson, Robert Gatenby, Charles Swanton, David Posada, Chung-I Wu, Joshua~D. Schiffman, E.~Shelley Hwang, Kornelia Polyak, Alexander R.~A. Anderson, Joel~S. Brown, Mel Greaves, and Darryl Shibata.
\newblock {Classifying the evolutionary and ecological features of neoplasms}.
\newblock {\em Nature Reviews Cancer}, 17(10):605--619, 10 2017.

\bibitem{Marusyk2012}
Andriy Marusyk, Vanessa Almendro, and Kornelia Polyak.
\newblock {Intra-tumour heterogeneity: a looking glass for cancer?}
\newblock {\em Nature Reviews Cancer}, 12(5):323--334, 5 2012.

\bibitem{Marusyk2020}
Andriy Marusyk, Michalina Janiszewska, and Kornelia Polyak.
\newblock {Intratumor Heterogeneity: The Rosetta Stone of Therapy Resistance}.
\newblock {\em Cancer Cell}, 37(4):471--484, 4 2020.

\bibitem{Mathur2022}
Deepti Mathur, Ethan Barnett, Howard~I. Scher, and Joao~B. Xavier.
\newblock {Optimizing the future: how mathematical models inform treatment schedules for cancer}.
\newblock {\em Trends in Cancer}, 8(6):506--516, 6 2022.

\bibitem{McGehee2024.02.19.580916}
Cordelia McGehee and Yoichiro Mori.
\newblock {A mathematical framework for comparison of intermittent versus continuous adaptive chemotherapy dosing in cancer}.
\newblock {\em npj Systems Biology and Applications}, 10(1):140, 11 2024.

\bibitem{McGranahan2017}
Nicholas McGranahan and Charles Swanton.
\newblock {Clonal Heterogeneity and Tumor Evolution: Past, Present, and the Future}.
\newblock {\em Cell}, 168(4):613--628, 2017.

\bibitem{Meacham2013a}
Corbin~E. Meacham and Sean~J. Morrison.
\newblock {Tumour heterogeneity and cancer cell plasticity}.
\newblock {\em Nature}, 501(7467):328--337, 9 2013.

\bibitem{Noble2022}
Robert Noble, Dominik Burri, Cécile Le~Sueur, Jeanne Lemant, Yannick Viossat, Jakob~Nikolas Kather, and Niko Beerenwinkel.
\newblock {Spatial structure governs the mode of tumour evolution}.
\newblock {\em Nature Ecology {\&} Evolution}, 6(2):207--217, 12 2021.

\bibitem{Park2023c}
J.~Park and P.~K. Newton.
\newblock {Stochastic competitive release and adaptive chemotherapy}.
\newblock {\em Physical Review E}, 108(3):034407, 9 2023.

\bibitem{Parker2021}
Taylor~M Parker, Kartik Gupta, António~M Palma, Michail Yekelchyk, Paul~B Fisher, Steven~R Grossman, Kyoung~Jae Won, Esha Madan, Eduardo Moreno, and Rajan Gogna.
\newblock {Cell competition in intratumoral and tumor microenvironment interactions}.
\newblock {\em The EMBO Journal}, 40(17):1--15, 2021.

\bibitem{Pennisi2018}
Elizabeth Pennisi.
\newblock {Is cancer a breakdown of multicellularity?}
\newblock {\em Science}, 360(6396):1391--1391, 6 2018.

\bibitem{Pressley2021}
Mariyah Pressley, Monica Salvioli, David~B. Lewis, Christina~L. Richards, Joel~S. Brown, and Kateřina Staňkov{\'{a}}.
\newblock {Evolutionary Dynamics of Treatment-Induced Resistance in Cancer Informs Understanding of Rapid Evolution in Natural Systems}.
\newblock {\em Frontiers in Ecology and Evolution}, 9(August):1--22, 8 2021.

\bibitem{Robertson-Tessi2024}
Mark Robertson-Tessi, Bina Desai, Tatiana Miti, Pragya Kumar, Sagnik Yarlagadda, Rishi Shah, Robert~Vander Velde, Daria Miroshnychenko, David Basanta, Alexander Anderson, and Andriy Marusyk.
\newblock {Abstract B008: Therapy-protective peristromal niches mediate positive ecological interaction between therapy-sensitive and therapy-resistant cells, altering the evolutionary dynamics of acquired targeted therapy resistance in lung cancers}.
\newblock {\em Cancer Research}, 84(22{\_}Supplement):B008--B008, 11 2024.

\bibitem{Seyedi2024}
Sareh Seyedi, Ruthanne Teo, Luke Foster, Daniel Saha, Lida Mina, Donald Northfelt, Karen~S. Anderson, Darryl Shibata, Robert Gatenby, Luis~H. Cisneros, Brigid Troan, Alexander R.~A. Anderson, and Carlo~C. Maley.
\newblock {Testing Adaptive Therapy Protocols Using Gemcitabine and Capecitabine in a Preclinical Model of Endocrine-Resistant Breast Cancer}.
\newblock {\em Cancers}, 16(2):257, 1 2024.

\bibitem{Sharma2010}
Sreenath~V. Sharma, Diana~Y. Lee, Bihua Li, Margaret~P. Quinlan, Fumiyuki Takahashi, Shyamala Maheswaran, Ultan McDermott, Nancy Azizian, Lee Zou, Michael~A. Fischbach, Kwok~Kin Wong, Kathleyn Brandstetter, Ben Wittner, Sridhar Ramaswamy, Marie Classon, and Jeff Settleman.
\newblock {A Chromatin-Mediated Reversible Drug-Tolerant State in Cancer Cell Subpopulations}.
\newblock {\em Cell}, 141(1):69--80, 2010.

\bibitem{Shi2023}
Zhen-Duo Shi, Kun Pang, Zhuo-Xun Wu, Yang Dong, Lin Hao, Jia-Xin Qin, Wei Wang, Zhe-Sheng Chen, and Cong-Hui Han.
\newblock {Tumor cell plasticity in targeted therapy-induced resistance: mechanisms and new strategies}.
\newblock {\em Signal Transduction and Targeted Therapy}, 8(1):113, 3 2023.

\bibitem{Simon2006}
Richard Simon and Larry Norton.
\newblock {The Norton–Simon hypothesis: designing more effective and less toxic chemotherapeutic regimens}.
\newblock {\em Nature Clinical Practice Oncology}, 3(8):406--407, 8 2006.

\bibitem{Smalley2019a}
Inna Smalley, Eunjung Kim, Jiannong Li, Paige Spence, Clayton~J. Wyatt, Zeynep Eroglu, Vernon~K. Sondak, Jane~L. Messina, Nalan~Akgul Babacan, Silvya~Stuchi Maria-Engler, Lesley De~Armas, Sion~L. Williams, Robert~A. Gatenby, Y.~Ann Chen, Alexander~R.A. Anderson, and Keiran~S.M. Smalley.
\newblock {Leveraging transcriptional dynamics to improve BRAF inhibitor responses in melanoma}.
\newblock {\em EBioMedicine}, 48:178--190, 10 2019.

\bibitem{Strobl2023b}
M.A.R. Strobl, J.~Gallaher, M.~Robertson-Tessi, J.~West, and A.R.A. Anderson.
\newblock {Treatment of evolving cancers will require dynamic decision support}.
\newblock {\em Annals of Oncology}, 34(10):867--884, 10 2023.

\bibitem{Strobl2023.03.22.533721}
Maximilian Strobl, Alexandra~L Martin, Jeffrey West, Jill Gallaher, Mark Robertson-Tessi, Robert Gatenby, Robert Wenham, Philip Maini, Mehdi Damaghi, and Alexander Anderson.
\newblock {Adaptive therapy for ovarian cancer: An integrated approach to PARP inhibitor scheduling}.
\newblock {\em bioRxiv}, 2023.

\bibitem{Strobl2024}
Maximilian~A.R. Strobl, Alexandra~L. Martin, Jeffrey West, Jill Gallaher, Mark Robertson-Tessi, Robert Gatenby, Robert Wenham, Philip~K. Maini, Mehdi Damaghi, and Alexander~R.A. Anderson.
\newblock {To modulate or to skip: De-escalating PARP inhibitor maintenance therapy in ovarian cancer using adaptive therapy}.
\newblock {\em Cell Systems}, 15(6):510--525, 6 2024.

\bibitem{Strobl2021a}
Maximilian~A.R. Strobl, Jeffrey West, Yannick Viossat, Mehdi Damaghi, Mark Robertson-Tessi, Joel~S. Brown, Robert~A. Gatenby, Philip~K. Maini, and Alexander~R.A. Anderson.
\newblock {Turnover Modulates the Need for a Cost of Resistance in Adaptive Therapy}.
\newblock {\em Cancer Research}, 81(4):1135--1147, 2 2021.

\bibitem{Sunassee2019}
Enakshi~D. Sunassee, Dean Tan, Nathan Ji, Renee Brady, Eduardo~G. Moros, Jimmy~J. Caudell, Slav Yartsev, and Heiko Enderling.
\newblock {Proliferation saturation index in an adaptive Bayesian approach to predict patient-specific radiotherapy responses}.
\newblock {\em International Journal of Radiation Biology}, 95(10):1421--1426, 10 2019.

\bibitem{Sung2021}
Hyuna Sung, Jacques Ferlay, Rebecca~L. Siegel, Mathieu Laversanne, Isabelle Soerjomataram, Ahmedin Jemal, and Freddie Bray.
\newblock {Global Cancer Statistics 2020: GLOBOCAN Estimates of Incidence and Mortality Worldwide for 36 Cancers in 185 Countries}.
\newblock {\em CA: A Cancer Journal for Clinicians}, 71(3):209--249, 5 2021.

\bibitem{Tan2024}
Rui~Zhen Tan.
\newblock {Tumour Growth Mechanisms Determine Effectiveness of Adaptive Therapy in Glandular Tumours}.
\newblock {\em Interdisciplinary Sciences: Computational Life Sciences}, 16(1):73--90, 3 2024.

\bibitem{Tian2024}
Yuan Tian, Xiaowei Wang, Cong Wu, Jiaming Qiao, Hai Jin, and Huafei Li.
\newblock {A protracted war against cancer drug resistance}.
\newblock {\em Cancer Cell International}, 24(1):326, 9 2024.

\bibitem{Traina2011}
Tiffany~A. Traina and Larry Norton.
\newblock {Norton-Simon Hypothesis}.
\newblock In {\em Encyclopedia of Cancer}, pages 2557--2559. Springer Berlin Heidelberg, Berlin, Heidelberg, 2011.

\bibitem{Turajlic2019}
Samra Turajlic, Andrea Sottoriva, Trevor Graham, and Charles Swanton.
\newblock {Resolving genetic heterogeneity in cancer}.
\newblock {\em Nature Reviews Genetics}, 20(7):404--416, 7 2019.

\bibitem{Uthamacumaran2022}
Abicumaran Uthamacumaran and Hector Zenil.
\newblock {A Review of Mathematical and Computational Methods in Cancer Dynamics}.
\newblock {\em Frontiers in Oncology}, 12(July):1--26, 7 2022.

\bibitem{Vasan2019a}
Neil Vasan, José Baselga, and David~M. Hyman.
\newblock {A view on drug resistance in cancer}.
\newblock {\em Nature}, 575(7782):299--309, 11 2019.

\bibitem{Wang2020}
Fu-Sheng Wang and Chao Zhang.
\newblock {What to do next to control the 2019-nCoV epidemic?}
\newblock {\em The Lancet}, 395(10222):391--393, 2 2020.

\bibitem{Wang2024c}
MingYi Wang, Jacob~G. Scott, and Alexander Vladimirsky.
\newblock {Threshold-awareness in adaptive cancer therapy}.
\newblock {\em PLOS Computational Biology}, 20(6):e1012165, 6 2024.

\bibitem{Wang2024a}
Tengfei Wang and Xiufen Zou.
\newblock {Dynamic analysis of a drug resistance evolution model with nonlinear immune response}.
\newblock {\em Mathematical Biosciences}, 374(June):109239, 8 2024.

\bibitem{Wang2019b}
Xuan Wang, Haiyun Zhang, and Xiaozhuo Chen.
\newblock {Drug resistance and combating drug resistance in cancer}.
\newblock {\em Cancer Drug Resistance}, 2(2):141--160, 2019.

\bibitem{West2023}
Jeffrey West, Fred Adler, Jill Gallaher, Maximilian Strobl, Renee Brady-Nicholls, Joel Brown, Mark Roberson-Tessi, Eunjung Kim, Robert Noble, Yannick Viossat, David Basanta, and Alexander~R.A. Anderson.
\newblock {A survey of open questions in adaptive therapy: Bridging mathematics and clinical translation}.
\newblock {\em eLife}, 12:1--25, 3 2023.

\bibitem{West2018}
Jeffrey West, Yongqian Ma, and Paul~K. Newton.
\newblock {Capitalizing on competition: An evolutionary model of competitive release in metastatic castration resistant prostate cancer treatment}.
\newblock {\em Journal of Theoretical Biology}, 455:249--260, 10 2018.

\bibitem{West2020a}
Jeffrey West, Li~You, Jingsong Zhang, Robert~A. Gatenby, Joel~S. Brown, Paul~K. Newton, and Alexander~R.A. Anderson.
\newblock {Towards multidrug adaptive therapy}.
\newblock {\em Cancer Research}, 80(7):1578--1589, 4 2020.

\bibitem{West2019b}
Jeffrey~B. West, Mina~N. Dinh, Joel~S. Brown, Jingsong Zhang, Alexander~R. Anderson, and Robert~A. Gatenby.
\newblock {Multidrug cancer therapy in metastatic castrate-resistant prostate cancer: An evolution-based strategy}.
\newblock {\em Clinical Cancer Research}, 25(14):4413--4421, 7 2019.

\bibitem{Zahir2020}
Nastaran Zahir, Ruping Sun, Daniel Gallahan, Robert~A. Gatenby, and Christina Curtis.
\newblock {Characterizing the ecological and evolutionary dynamics of cancer}.
\newblock {\em Nature Genetics}, 52(8):759--767, 8 2020.

\bibitem{Zhang2022e}
Jingsong Zhang, Jessica Cunningham, Joel Brown, and Robert Gatenby.
\newblock {Evolution-based mathematical models significantly prolong response to abiraterone in metastatic castrate-resistant prostate cancer and identify strategies to further improve outcomes}.
\newblock {\em eLife}, 11:1--105, 6 2022.

\bibitem{Zhang2017d}
Jingsong Zhang, Jessica~J. Cunningham, Joel~S. Brown, and Robert~A. Gatenby.
\newblock {Integrating evolutionary dynamics into treatment of metastatic castrate-resistant prostate cancer}.
\newblock {\em Nature Communications}, 8(1):1--9, 2017.

\bibitem{Zhang2022}
Yu~Zhang, Trang Vu, Douglas~C Palmer, Rigel~J Kishton, Lanqi Gong, Jiao Huang, Thanh Nguyen, Zuojia Chen, Cari Smith, Ferenc Liv{\'{a}}k, Rohit Paul, Chi-ping Day, Chuan Wu, Glenn Merlino, Kenneth Aldape, Xin-yuan Guan, and Peng Jiang.
\newblock {A T cell resilience model associated with response to immunotherapy in multiple tumor types}.
\newblock {\em Nature Medicine}, 28(7):1421--1431, 7 2022.

\end{thebibliography}
\bibliographystyle{plain}

\bigskip

\newpage








\newpage
\appendix
\renewcommand{\thefigure}{S\arabic{figure}}
\setcounter{figure}{0}

\newpage

\section{Details of the dependence of TTP on various parameters in neutral competition condition}
By varying the initial proportion of resistant cells and the initial total cell count under different treatment-holiday thresholds (Figure \ref{fig:S:equal-2}A1-A3), we observed that when the initial total cell count is small and the initial proportion of resistant cells is high, the TTP is short. In this case, although the tumor size is small, it contains a large proportion of resistant cells, making it difficult for sensitive cells to suppress resistant cells, and thus cancer becomes harder to treat. As the initial total cell count increases and the initial proportion of resistant cells decreases, the TTP gradually increases. At this stage, although the number of resistant cells is low, the tumor is larger, requiring multiple treatment cycles in adaptive therapy to suppress resistant cells. Overall, larger adaptive thresholds, higher initial total cell counts, and lower initial proportions of resistant cells result in better treatment outcomes.

By varying the treatment-holiday threshold for dose-skipping and the initial proportion of resistant cells under different initial total cell proportions (Figure \ref{fig:S:equal-2}B1-B3), we found that lower treatment-holiday thresholds and higher initial proportions of resistant cells result in shorter TTPs. This is because tumors containing a high proportion of resistant cells make it difficult for adaptive therapy to reduce the total cell count below the threshold. As the adaptive threshold increases and the initial proportion of resistant cells decreases, the TTP gradually increases. When the adaptive threshold is high and the initial proportion of resistant cells is low, the TTP is maximized. At this stage, the tumor contains a significant number of sensitive cells, and a high threshold adaptive therapy can increase the number of treatment cycles, fully utilizing sensitive cells to suppress resistant cell growth. Overall, higher initial total cell counts are associated with longer TTPs.

By varying the treatment-holiday threshold for dose modulation and the initial total cell count under different initial proportions of resistant cells (Figure \ref{fig:S:equal-2}C1-C3), we observed that lower treatment-holiday thresholds and lower initial total cell counts result in shorter TTPs. This occurs because, despite the smaller tumor size, the low adaptive threshold leads to prolonged treatment cycles, causing significant damage to sensitive cells and reducing their ability to suppress resistant cells, resulting in shorter TTPs. As the adaptive threshold increases and the initial total cell count rises, the TTP gradually increases. When the adaptive threshold and the initial total cell count are both high, the TTP is maximized. At this stage, although the tumor size is large, the shorter treatment cycles allow sensitive cells sufficient time to recover and suppress resistant cell growth, resulting in longer TTPs. Overall, lower initial proportions of resistant cells are associated with longer TTPs.

Under the same conditions, simulations that varied the growth rates of sensitive and resistant cells revealed that when the ratio of the growth rate of sensitive cells to resistant cells ($r_S/r_R$) is greater than $1$, indicating that the growth rate of sensitive cells exceeds that of resistant cells, the sensitive cells exhibit stronger competitiveness, resulting in longer TTPs. Conversely, when the ratio is less than $1$, indicating that the growth rate of resistant cells exceeds that of sensitive cells, the TTP is shorter, and the higher the growth rate of resistant cells, the shorter the TTP. This relationship is illustrated in Figures \ref{fig:S:equal:supplement-1} through \ref{fig:S:equal:supplement-5}.

\section{Details of comparison of adaptive therapy with other strategies in the weak competition condition}
To investigate the probability of selecting adaptive therapy under varying parameters, we randomly varied five key model parameters from Table \ref{tab:parameters}, creating a cohort of $1200$ virtual patients. Simulations across different initial tumor burdens revealed that as the initial cell count increases, the probability of selecting adaptive therapy also increases (Main Text Figure \ref{fig:T:leq_1}D1-D7). However, when the competition coefficients satisfy $\alpha>1$ and $\beta>1$, and the initial cell count approaches the carrying capacity, the probability of selecting adaptive therapy decreases (Main Text Figure \ref{fig:T:leq_1}D8). This is because, under high competition coefficients, tumor competition becomes intense, and both MTD and adaptive therapy can control tumor size indefinitely, leading to a lower probability of selecting adaptive therapy with adaptive therapy (Figure \ref{fig:S:supplement_geq_1}).

In the case of conditional-improve, the differences in treatment outcomes among the four optimal strategies are small (approximately one month; $C_\mathrm{TH0}=0.98$, $(\delta_1, \delta_2, C_\mathrm{TH2}-1, C_\mathrm{TH1}-1)=(0.25, 0.25, 0.05, 0.07)$, $(T_D, T)=(9, 10)$). Since intermittent therapy requires a treatment holiday, the calculation of its optimal cycle revealed a one-day treatment holiday, resulting in treatment outcomes similar to MTD (Figure \ref{fig:S:leq-2}A). Considering cumulative drug toxicity, AT-S demonstrates the best efficiency among the four strategies (Figure \ref{fig:S:leq-2}B).

Subsequently, we varied the initial tumor parameters to analyze the treatment outcomes of the four strategies. When the initial proportion of resistant cells remains constant, and the initial cell count is small, the differences in TTP extension among the strategies are minimal. As the initial cell count increases, AT-S outperforms the other strategies in extending TTP, while AT-M shows the best Relative efficiency (Figure \ref{fig:S:leq-2}C).

When the initial tumor burden remains constant, and the initial proportion of resistant cells is small, AT-S outperforms the other strategies. However, as the initial proportion of resistant cells increases, MTD, AT-S, and IT outperform AT-M, with AT-S and AT-M demonstrating the best Relative efficiency among the four strategies (Figure \ref{fig:S:leq-2}D).

For the uniform-improve scenario, the differences in treatment outcomes among the four optimal strategies are minimal (approximately one month; $C_\mathrm{TH0}=0.98$, $(\delta_1, \delta_2, C_\mathrm{TH2}-1, C_\mathrm{TH1}-1)=(0.25, 0.25, 0.05, 0.09)$, $(T_D, T)=(20, 21)$). Similar to the conditional-improve scenario, the optimal intermittent therapy cycle includes a one-day treatment holiday, resulting in treatment outcomes similar to MTD (Figure \ref{fig:S:leq-3}A). Considering cumulative drug toxicity, AT-S shows the best efficiency among the four strategies (Figure \ref{fig:S:leq-3}B).

When varying the initial total cell count while keeping the initial proportion of resistant cells constant, the differences in TTP extension among the four strategies are minor. MTD and IT outperform AT-S and AT-M in terms of TTP, but AT-S and AT-M have a better relative efficiency than MTD and IT (Figure \ref{fig:S:leq-3}C). When varying the initial proportion of resistant cells while keeping the initial tumor burden constant, MTD and IT outperform AT-S and AT-M in terms of TTP. However, AT-S and AT-M exhibit a better Relative efficiency compared to MTD and IT (Figure \ref{fig:S:leq-3}D).

For the uniform-decline scenario, the differences in TTP among the four optimal strategies are larger (approximately two months; $C_\mathrm{TH0}=0.98$, $(\delta_1, \delta_2, C_\mathrm{TH2}-1, C_\mathrm{TH1}-1)=(0.25, 0.25, 0.05, 0.05)$, $(T_D, T)=(15, 18)$). Among the four strategies, AT-S and AT-M demonstrate the best treatment outcomes (Figure \ref{fig:S:leq-4}A). Regarding cumulative drug toxicity, AT-S achieves the best relative efficiency (Figure \ref{fig:S:leq-4}B).

When varying the initial total cell count while keeping the initial proportion of resistant cells constant, the differences in treatment outcomes are minor for small tumors. For larger tumors, AT-S shows the best treatment outcomes, while AT-M and AT-S have a better Relative efficiency than the other two strategies (Figure \ref{fig:S:leq-4}C). When varying the initial proportion of resistant cells while keeping the initial tumor burden constant, AT-S demonstrates the best treatment outcomes regardless of the proportion of resistant cells. AT-M shows slightly worse treatment outcomes than the other three strategies but exhibits the best Relative efficiency (Figure \ref{fig:S:leq-4}D).
\newpage

\section{Supplementary figures}
\newpage

\begin{figure}[tb]
\centering
\includegraphics[width=\textwidth]{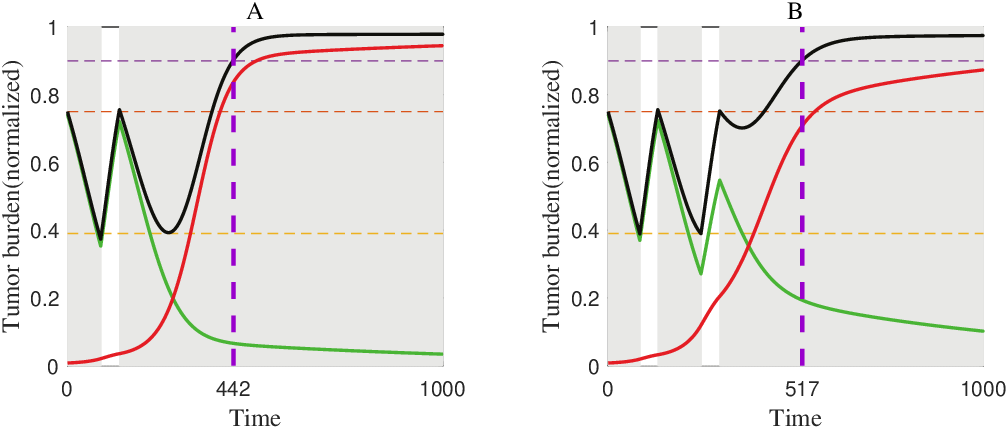}
\caption{Tumor evolution dynamics in neutral competition with different treatment-holiday thresholds, showing the reason of discontinuity in the dependence of TTP on the treatment-holiday thresholds (Figure \ref{fig:T:equal-1}). Purple dashed line: TTP; Red: Resistant cells; Green: Sensitive cells; Black: Total cell count; Shaded area: Treatment implementation.
\textbf{(A)} $n_0=0.75$, $f_R=0.01$, $C_\mathrm{TH0}=0.51$.
\textbf{(B)} $n_0=0.75$, $f_R=0.01$, $C_\mathrm{TH0}=0.52$. Other parameters are taken as those in Table \ref{tab:parameters}.
}
\label{fig:S:supplement_cellcount}
\end{figure}

\begin{figure}[phtb]
\centering
\includegraphics[width=\textwidth]{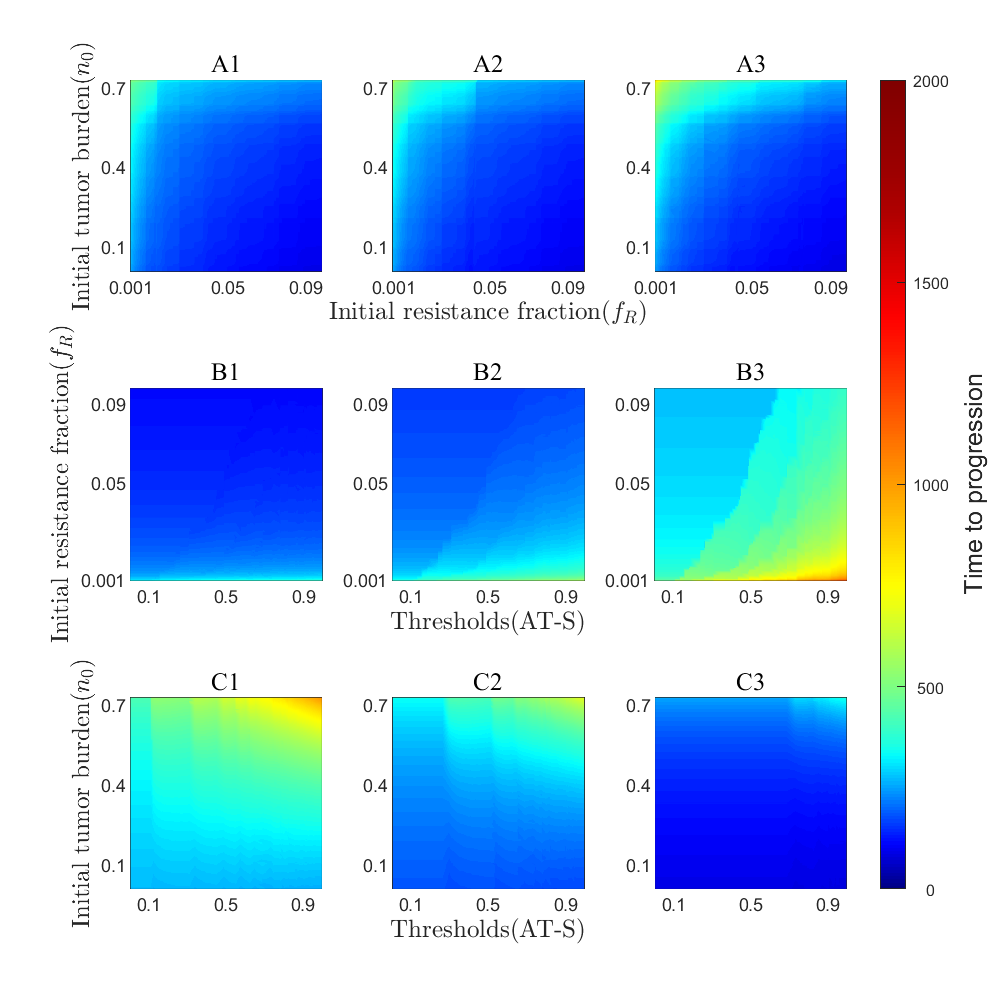}
\caption{The impact of initial tumor state and treatment-holiday thresholds on TTP.
\textbf{(A1)}-\textbf{(A3)} The effect of varying the initial proportion of resistant cells and the initial total cell count on TTP at fixed adaptive thresholds. A1: $C_{TH0} = 0.3$; A2: $C_{TH0} = 0.5$; A3: $C_{TH0} = 0.8$.
\textbf{(B1)}-\textbf{(B3)} The effect of varying treatment-holiday thresholds and the initial proportion of resistant cells on TTP at fixed initial cell counts. B1: $n_{0} = 0.25$; B2: $n_{0} = 0.5$; B3: $n_{0} = 0.75$.
\textbf{(C1)}-\textbf{(C3)} The effect of varying treatment-holiday thresholds and initial cell counts on TTP at fixed proportions of resistant cells. C1: $f_{R} = 0.001$; C2: $f_{R} = 0.01$; C3: $f_{R} = 0.1$.
}
\label{fig:S:equal-2}
\end{figure}

\begin{figure}[tb]
\centering
\includegraphics[width=\textwidth]{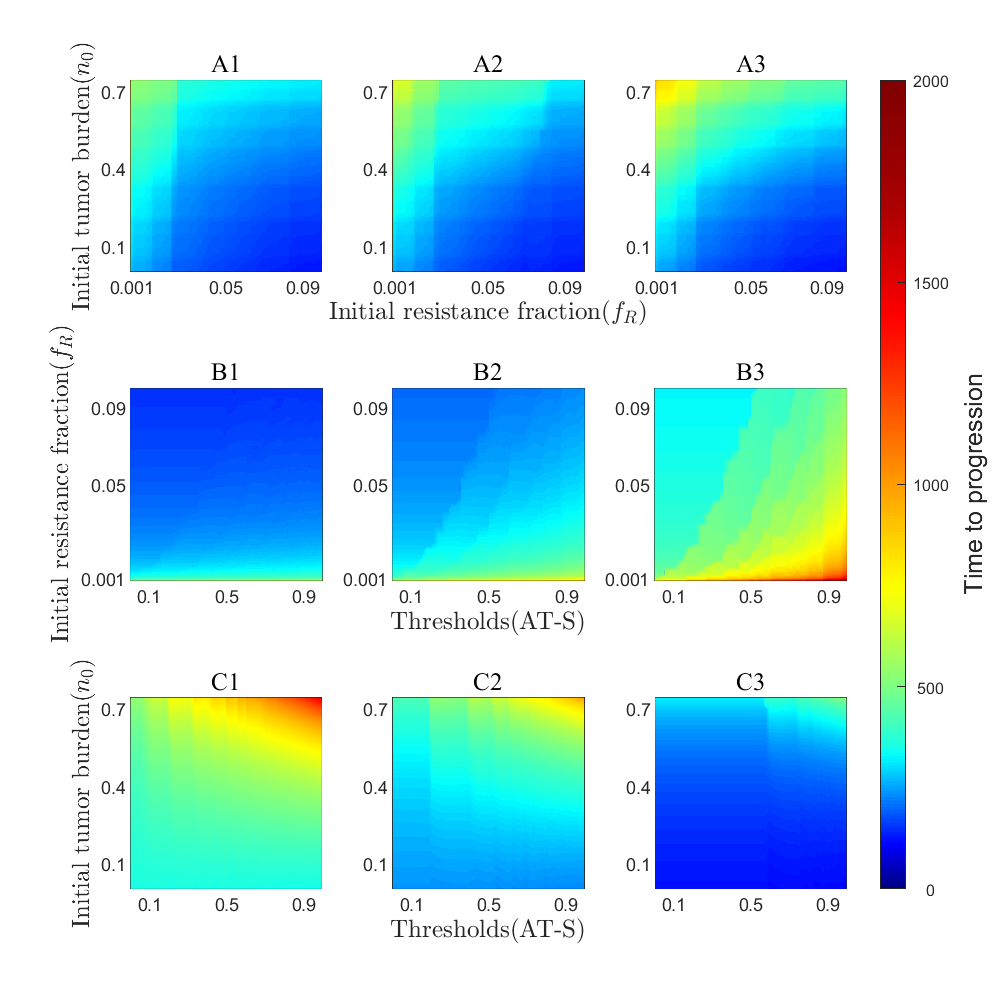}
\caption{The dependence of TTP on the thresholds, initial tumor burden ($n_0$) and initial fraction of resistant cells ($f_R$) in neutral competition with $r_S/r_R=1.8863$, ${r}_{S}=0.0389$.
\textbf{(A1)}-\textbf{(A3)} Effects of varying initial resistance proportions and cell counts on TTP, A1: $C_\mathrm{TH0}=0.3$; A2: $C_\mathrm{TH0}=0.5$; A3: $C_\mathrm{TH0}=0.8$
\textbf{(B1)}-\textbf{(B3)} Effects of varying treatment-holiday thresholds and initial resistance proportions on TTP. B1: $n_0=0.25$; B2: $n_0=0.5$; B3: $n_0=0.75$
\textbf{(C1)}-\textbf{(C3)} Effects of varying treatment-holiday thresholds and initial tumor burden on TTP, C1: $f_{R}=0.001$; C2: $f_{R}=0.01$; C3: $f_{R}=0.1$.
}
\label{fig:S:equal:supplement-1}
\end{figure}

\begin{figure}[tb]
\centering
\includegraphics[width=\textwidth]{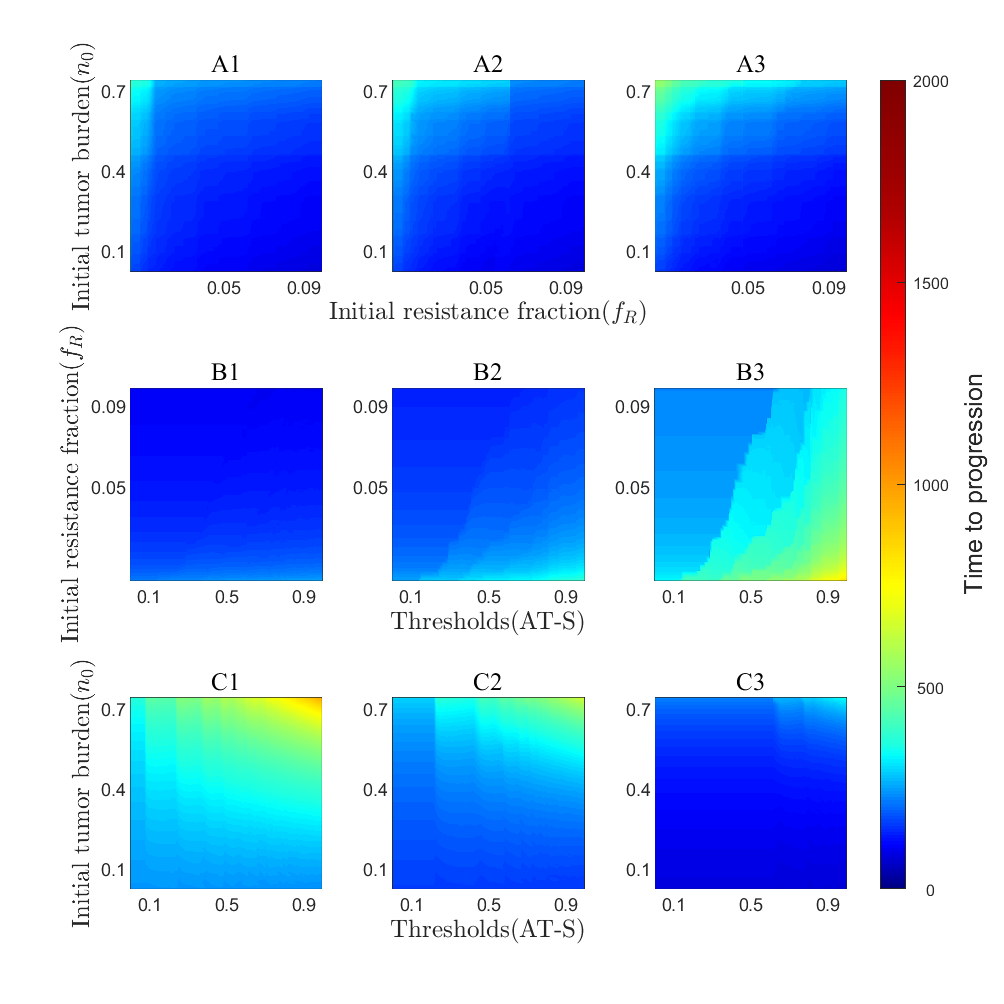}
\caption{The dependence of TTP on the thresholds, initial tumor burden ($n_0$) and initial fraction of resistant cells ($f_R$) in neutral competition with $r_S/r_R=1.6493$, ${r}_{S}=0.0508$.
\textbf{(A1)}-\textbf{(A3)} Effects of varying initial resistance proportions and cell counts on TTP, A1: $C_\mathrm{TH0}=0.3$; A2: $C_\mathrm{TH0}=0.5$; A3: $C_\mathrm{TH0}=0.8$
\textbf{(B1)}-\textbf{(B3)} Effects of varying treatment-holiday thresholds and initial resistance proportions on TTP. B1: $n_0=0.25$; B2: $n_0=0.5$; B3: $n_0=0.75$
\textbf{(C1)}-\textbf{(C3)} Effects of varying treatment-holiday thresholds and initial tumor burden on TTP, C1: $f_{R}=0.001$; C2: $f_{R}=0.01$; C3: $f_{R}=0.1$.
}
\label{fig:S:equal:supplement-2}
\end{figure}

\begin{figure}[tb]
\centering
\includegraphics[width=\textwidth]{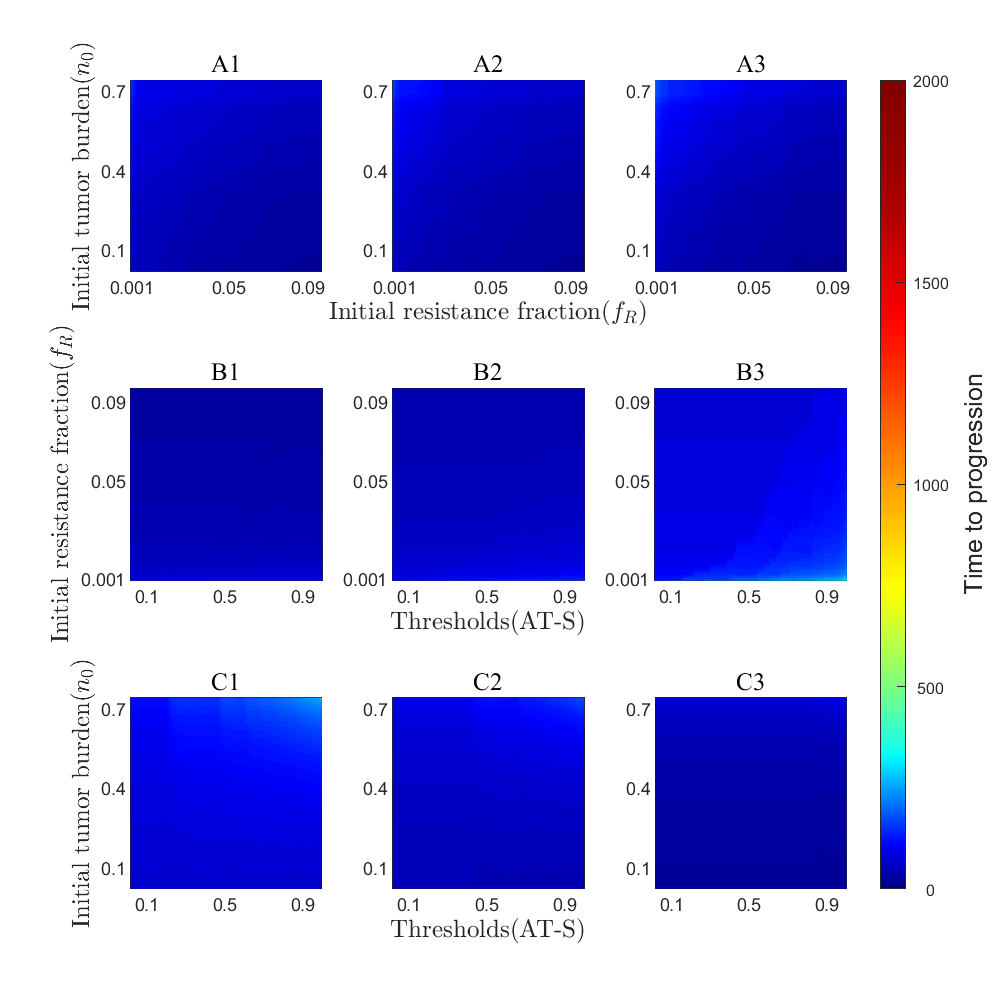}
\caption{The dependence of TTP on the thresholds, initial tumor burden ($n_0$) and initial fraction of resistant cells ($f_R$) in neutral competition with $r_S/r_R=0.8147$, ${r}_{S}=0.0915$.
\textbf{(A1)}-\textbf{(A3)} Effects of varying initial resistance proportions and cell counts on TTP, A1: $C_\mathrm{TH0}=0.3$; A2: $C_\mathrm{TH0}=0.5$; A3: $C_\mathrm{TH0}=0.8$
\textbf{(B1)}-\textbf{(B3)} Effects of varying treatment-holiday thresholds and initial resistance proportions on TTP. B1: $n_0=0.25$; B2: $n_0=0.5$; B3: $n_0=0.75$
\textbf{(C1)}-\textbf{(C3)} Effects of varying treatment-holiday thresholds and initial tumor burden on TTP, C1: $f_{R}=0.001$; C2: $f_{R}=0.01$; C3: $f_{R}=0.1$.
}
\label{fig:S:equal:supplement-3}
\end{figure}

\begin{figure}[tb]
\centering
\includegraphics[width=\textwidth]{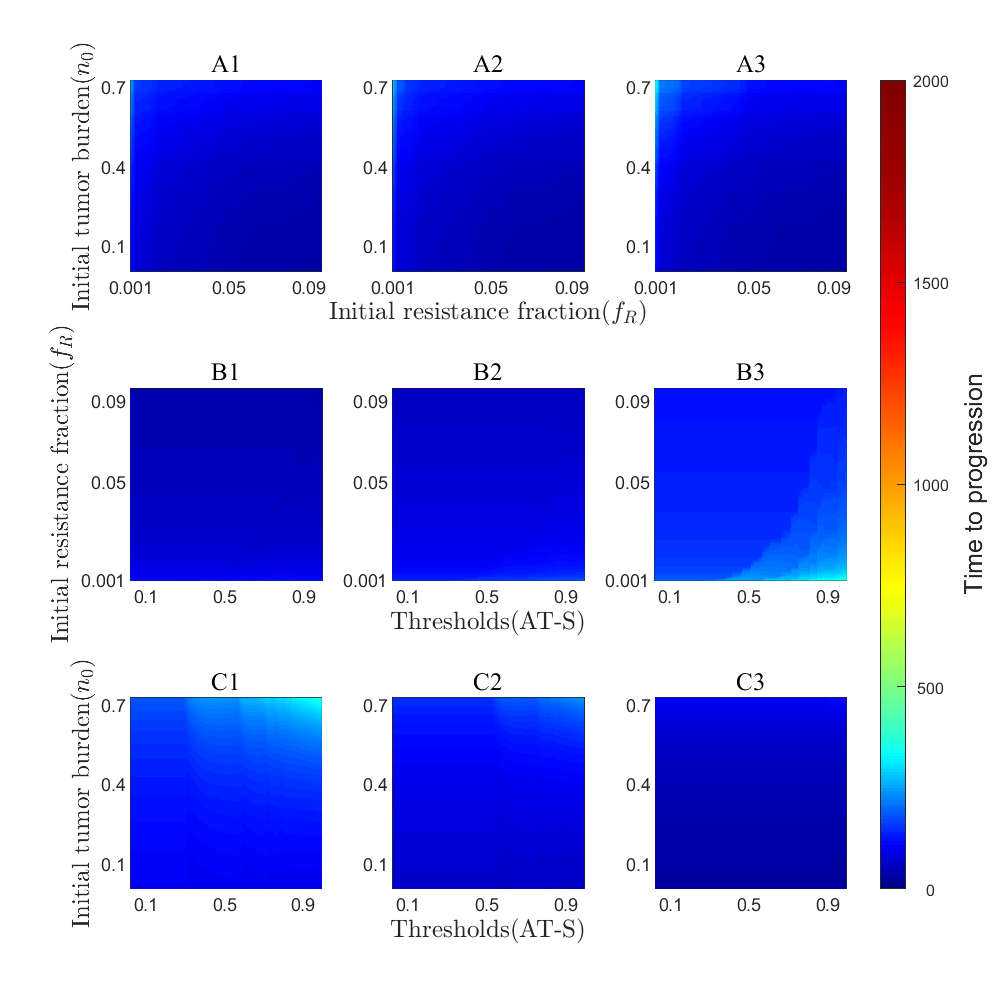}
\caption{The dependence of TTP on the thresholds, initial tumor burden ($n_0$) and initial fraction of resistant cells ($f_R$) in neutral competition with $r_S/r_R=0.6019$, ${r}_{S}=0.0445$.
\textbf{(A1)}-\textbf{(A3)} Effects of varying initial resistance proportions and cell counts on TTP, A1: $C_\mathrm{TH0}=0.3$; A2: $C_\mathrm{TH0}=0.5$; A3: $C_\mathrm{TH0}=0.8$
\textbf{(B1)}-\textbf{(B3)} Effects of varying treatment-holiday thresholds and initial resistance proportions on TTP. B1: $n_0=0.25$; B2: $n_0=0.5$; B3: $n_0=0.75$
\textbf{(C1)}-\textbf{(C3)} Effects of varying treatment-holiday thresholds and initial tumor burden on TTP, C1: $f_{R}=0.001$; C2: $f_{R}=0.01$; C3: $f_{R}=0.1$.
}
\label{fig:S:equal:supplement-4}
\end{figure}

\begin{figure}[tb]
\centering
\includegraphics[width=\textwidth]{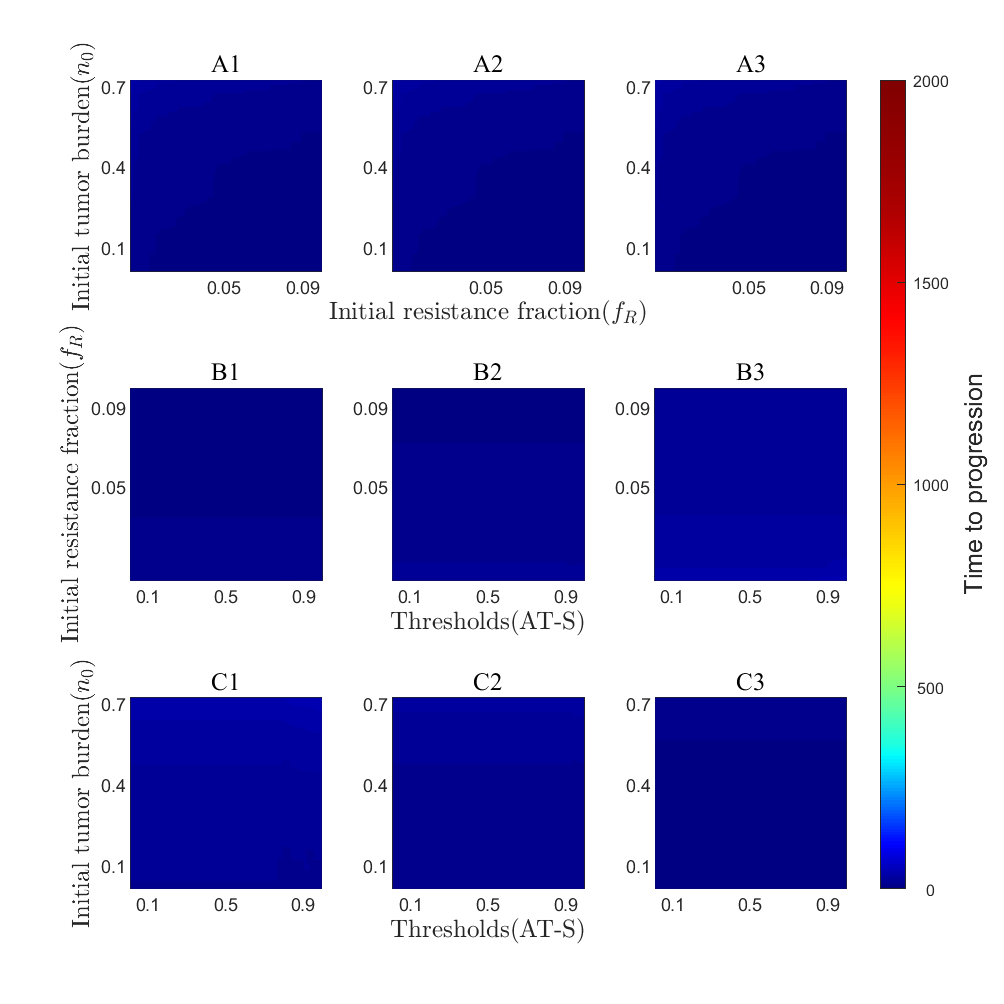}
\caption{The dependence of TTP on the thresholds, initial tumor burden ($n_0$) and initial fraction of resistant cells ($f_R$) in neutral competition with $r_S/r_R=0.1228$, ${r}_{S}=0.0695$.
\textbf{(A1)}-\textbf{(A3)} Effects of varying initial resistance proportions and cell counts on TTP, A1: $C_\mathrm{TH0}=0.3$; A2: $C_\mathrm{TH0}=0.5$; A3: $C_\mathrm{TH0}=0.8$
\textbf{(B1)}-\textbf{(B3)} Effects of varying treatment-holiday thresholds and initial resistance proportions on TTP. B1: $n_0=0.25$; B2: $n_0=0.5$; B3: $n_0=0.75$
\textbf{(C1)}-\textbf{(C3)} Effects of varying treatment-holiday thresholds and initial tumor burden on TTP, C1: $f_{R}=0.001$; C2: $f_{R}=0.01$; C3: $f_{R}=0.1$.
}
\label{fig:S:equal:supplement-5}
\end{figure}

\begin{figure}[tb]
\centering
\includegraphics[width=\textwidth]{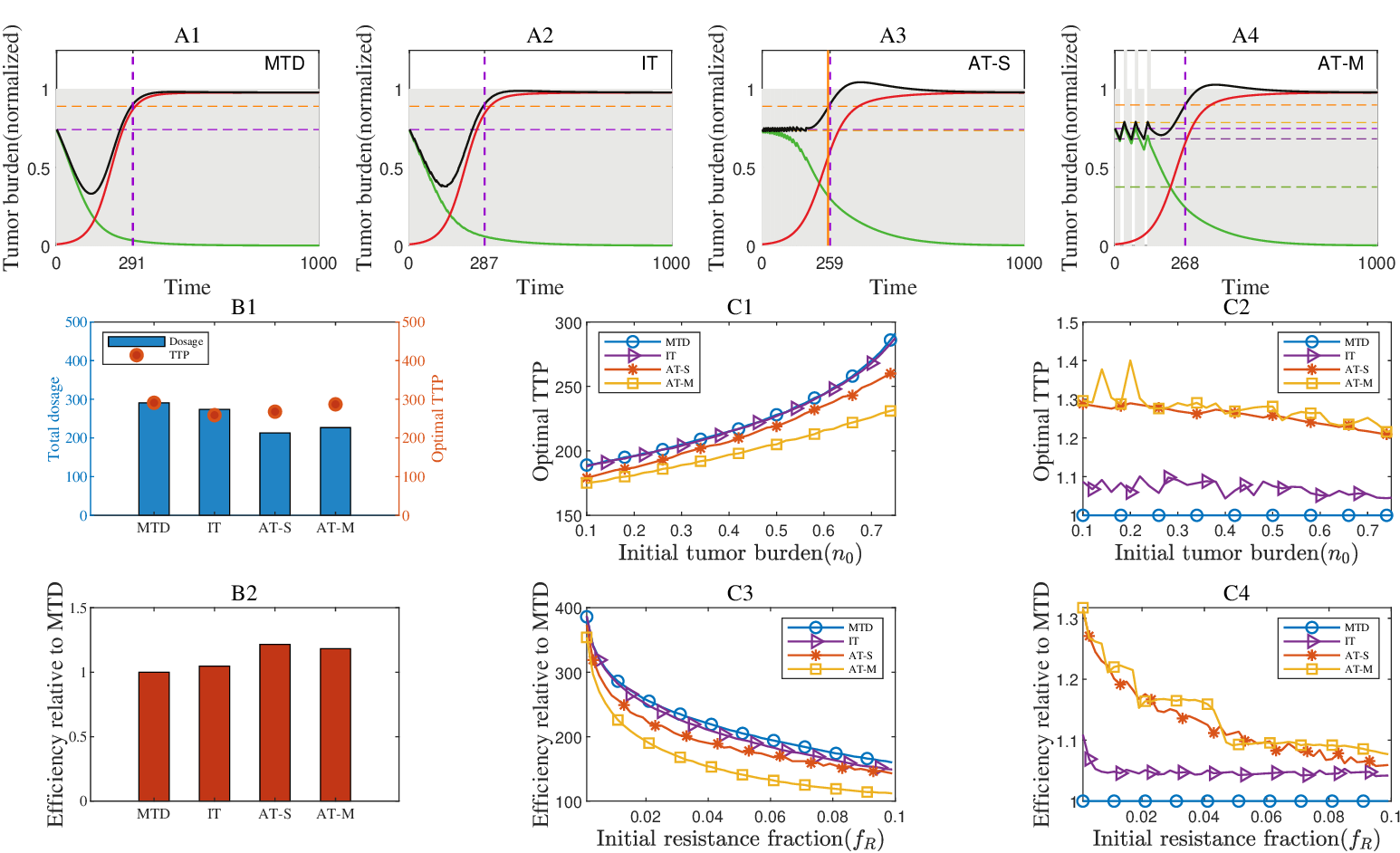}
\caption{Comparison of four treatment strategies on TTP in the scenario of \textbf{conditional-improve} where the parameters are taken as $\alpha=0.6$,$\beta=0.5$, $f_R=0.01$ and $n_0=0.75$.
\textbf{(A1)}-\textbf{(A4)} Tumor dynamics under four treatment strategies with optimal treatment parameters. Adaptive therapy outperforms MTD and IT. Optimal treatment parameters: $C_\mathrm{TH0}=0.98$, $\delta_1=0.25$, $\delta_2= 0.25$, $C_\mathrm{TH2}=1.05$ and $C_\mathrm{TH1}=1.07$, $T_D=9$, $T=10$.
\textbf{(B1)} Cumulative drug toxicity (total dosage) and corresponding TTP for the four strategies.
\textbf{(B2)} Relative efficiency for the four strategies.
\textbf{(C1)}-\textbf{(C2)} Effects of varying the initial tumor cell proportion on TTP and Relative efficiency of the four strategies.
\textbf{(C3)}-\textbf{(C4)} Effects of varying the initial resistant cell proportion on TTP and Relative efficiency of the four strategies.
}
\label{fig:S:leq-2}
\end{figure}

\begin{figure}[tb]
\centering
\includegraphics[width=\textwidth]{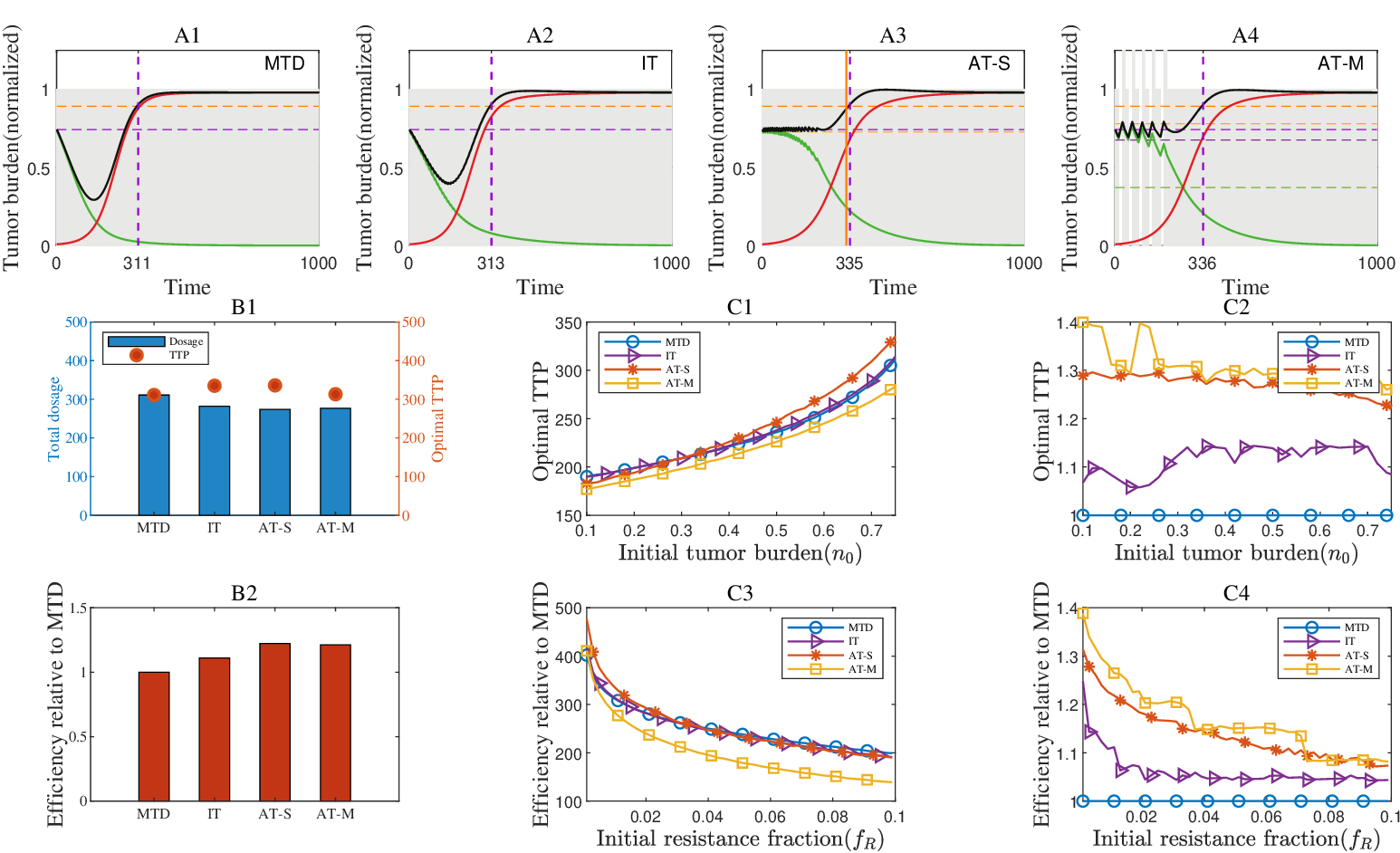}
\caption{Comparison of four treatment strategies on TTP in the scenario of \textbf{uniform-improve} scenario where the parameters are taken as $\alpha=0.6$, $\beta=0.3$, $f_R=0.01$ and $n_0=0.75$.
\textbf{(A1)}-\textbf{(A4)} Tumor dynamics of tumor evolution with four treatment strategies with optimal treatment parameters. Optimal parameters are taken as follows. $C_\mathrm{TH0}=0.98$, $\delta_1=0.25$, $\delta_2= 0.25$, $C_\mathrm{TH2}=1.05$ and $C_\mathrm{TH1}=1.09$, $T_D=20$, $T=21$.
\textbf{(B1)} Cumulative drug toxicity (total dosage) and corresponding TTP for the four strategies.
\textbf{(B2)} Relative efficiency for the four strategies.
\textbf{(C1)}-\textbf{(C2)} Effects of varying the initial tumor cell proportion on TTP and Relative efficiency of the four strategies.
\textbf{(C3)}-\textbf{(C4)} Effects of varying the initial resistant cell proportion on TTP and Relative efficiency of the four strategies.
}
\label{fig:S:leq-3}
\end{figure}

\begin{figure}[tb]
\centering
\includegraphics[width=\textwidth]{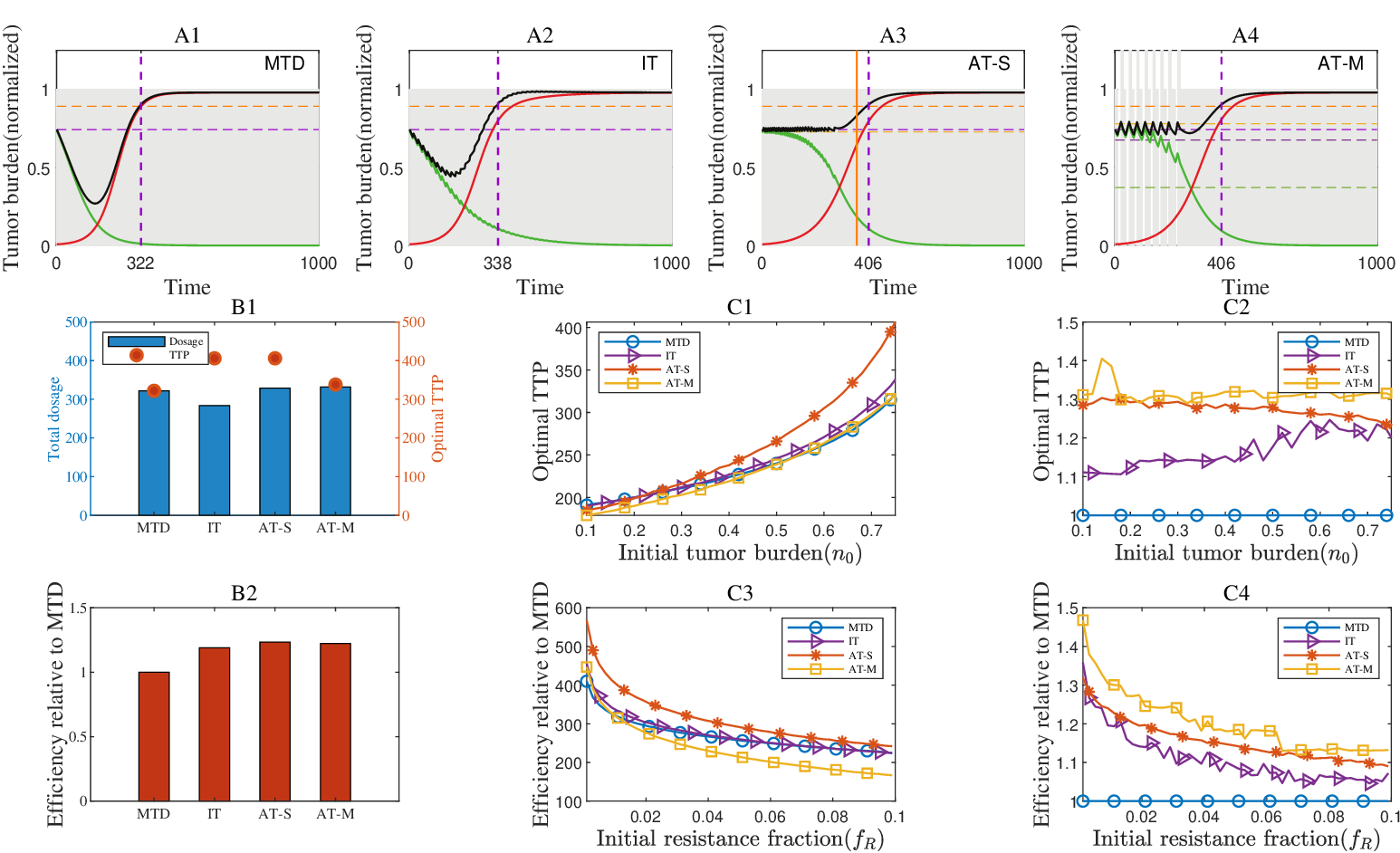}
\caption{Comparison of four treatment strategies on TTP in the scenario of \textbf{uniform-decline} where the parameters are taken as $\alpha=0.3$, $\beta=0.6$, $f_R=0.01$ and $n_0=0.75$)
\textbf{(A)} Tumor dynamics under four treatment strategies with optimal treatment parameters. Optimal parameters are as follows. $C_\mathrm{TH0}=0.98$, $\delta_1=0.25$, $\delta_2= 0.25$, $C_\mathrm{TH2}=1.05$ and $C_\mathrm{TH1}=1.05$, $T_D=15$, $T=18$.
\textbf{(B1)} Cumulative drug toxicity (total dosage) and corresponding TTP for the four strategies.
\textbf{(B2)} Relative efficiency for the four strategies.
\textbf{(C1)}-\textbf{(C2)} Effects of varying the initial tumor cell proportion on TTP and Relative efficiency of the four strategies.
\textbf{(C3)}-\textbf{(C4)} Effects of varying the initial resistant cell proportion on TTP and Relative efficiency of the four strategies.
}
\label{fig:S:leq-4}
\end{figure}

\begin{figure}[tb]
\centering
\includegraphics[width=\textwidth]{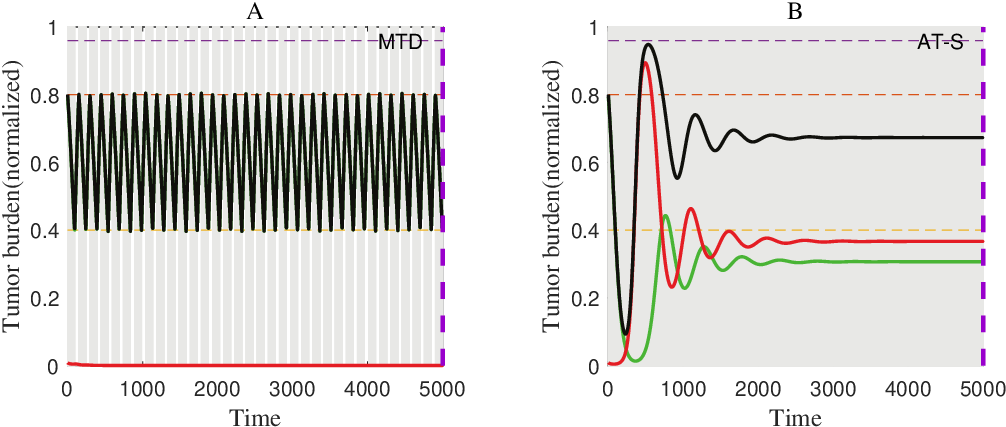}
\caption{Tumor evolution dynamics with $\alpha=2$, $\beta=2$. Purple dashed line: TTP; Red: Resistant cells; Green: Sensitive cells; Black: Total cell count; Shaded area: Treatment implementation.
\textbf{(A)} Adaptive therapy ($n_0=0.8$, $f_R=0.01$, $C_\mathrm{TH0}=0.5$).
\textbf{(B)} Maximum tolerated dose ($n_0=0.8$ , $f_R=0.01$).
}
\label{fig:S:supplement_geq_1}
\end{figure}

\end{document}